\def\input@path{{../sections/}{../}}
\numberwithin{equation}{section}
\theoremstyle{plain}
\newtheorem{theorem}{Theorem}[section]
\newtheorem{definition}{Definition}[section]
\newtheorem{proposition}{Proposition}[section]
\newtheorem{assumption}[theorem]{Assumption}
\newtheorem{lemma}[theorem]{Lemma}
\newtheorem{corollary}[theorem]{Corollary}
\begin{document}

\begin{frontmatter}
\title{Optimal~Two-Step Prediction~in~Regression}
\runtitle{Two-Step Prediction}

\begin{aug}

\author{\fnms{Didier} \snm{Ch\'etelat}
\ead[label=e1]{didier.chetelat@hec.ca}}

\address{
Department of Decision Sciences\\
HEC Montr\'eal\\
3000, chemin de la C\^ote-Sainte-Catherine\\
Montr\'eal, Canada\\
\printead{e1}
% \printead{u1}
}

\author{\fnms{Johannes} \snm{Lederer}
\ead[label=e2]{ledererj@uw.edu}}

\address{
Departments of Statistics and Biostatistics\\
University of Washington\\
Box 354322\\
Seattle, WA 98195-4322\\
\printead{e2}
% \printead{u2}
}

\author{\fnms{Joseph} \snm{Salmon}
\ead[label=e3]{joseph.salmon@telecom-paristech.fr}}

\address{
LTCI, CNRS, T\'el\'ecom ParisTech,\\
Universit\'e Paris-Saclay,\\
75013, Paris, France\\
\printead{e3}
% \printead{u3}
}

\runauthor{Ch\'etelat, Lederer, Salmon}

\end{aug}

\begin{abstract}
High-dimensional prediction typically comprises two steps: variable selection and subsequent least-squares refitting on the selected variables. However, the standard variable selection procedures, such as the lasso, hinge on tuning parameters that need to be calibrated. Cross-validation, the most popular calibration scheme, is computationally costly and lacks finite sample guarantees. In this paper, we introduce an alternative scheme, easy to implement and both computationally and theoretically efficient.
\end{abstract}

\begin{keyword}[class=MSC]
\kwd[Primary ]{62G08} % Nonparametric regression
%\kwd{????}
\kwd[; secondary ]{62J07} % Ridge regression; shrinkage estimators
\end{keyword}

\begin{keyword}
\kwd{High-Dimensional Prediction}
\kwd{Tuning Parameter Selection}
\kwd{Lasso}
\end{keyword}
\tableofcontents
\end{frontmatter}

%!TEX root = ../TwoStepPrediction.tex
\section{Introduction}\label{sec:intro}
Variable selection has become a basic tool for estimating linear models on large data sets. The most popular method for variable selection is the lasso~\cite{Tibshirani96}, which minimizes the sum of squares errors under an $\ell_1$-penalty. Although efficient at selecting variables when properly tuned, the lasso has the disadvantage that all coefficients are shrunk towards zero. To mitigate this bias, practitioners typically rely on a two-step estimation of the coefficients by computing a least-squares estimate on the variables selected by the lasso.

For illustration, consider the leukemia micro-array data set of~\cite{Golub99}, which consists of $n=38$ bone marrow samples analyzed with $p=7129$ probes from several thousand human genes. A particular interest is to predict the type of leukemia (AML or ALL) present in a patient. The data set also contains an independent test set of 34~observations that are used for assessment of the predictive performance.

In this problem, there are many more variables (7129 features) than available observations (38 samples), and in such a context, a least-squares fitting is not appropriate. A standard solution is to perform variable selection using the lasso, with tuning parameter chosen by cross-validation on the prediction loss. However, since the lasso is known to involve a bias, practitioners commonly refit a least-squares estimate on the selected variables. If the lasso tuning parameter is chosen using 10-fold cross-validation, this approach, called lassoCV in the following, yields a prediction risk of $0.36$ on the test set, computed in 463 seconds.

%new without y normalized

%    AVpr      lslassoBIC    lasso (CV = 10)  lslasso (CV =10)  lassoBIC
% 0  0.291441  0.371848           0.3637          0.453009    0.576266
%    AVpr      lslassoBIC    lasso (CV = 10)  lslasso (CV =10)  lassoBIC
% 0  70.499288  55.613302       501.277616        553.877605  54.857208

% AVp        0.291440790826 time 58.2518730164
% LassoCV   0.363699950533 time 463.4928689
% LslassoCV 0.453008670109 time 499.608872175
% LassoBIC  0.576266000803 time 39.3939061165
% LSLassoBIC 0.371847810251 time 40.8200569153

% old, with y normalized
% % AVp       0.00404778876148 time 27.3655428886
% % LassoCV   0.00504985473199 time 204.78106904
% % LslassoCV 0.00478205811248 time 238.669601917
% % LassoBIC  0.00800369466119 time 19.8406951427

Although common among practitioners, this approach is suboptimal, because the cross-validation does not take into account the least-squares refitting. Another alternative is to tune the cross-validation for the entire two-step procedure. On the test set, this approach with 10-fold cross-validation, called lslassoCV in the following, yields a prediction risk of $0.45$ computed in 499 seconds.

This adjusted approach is natural, yet suffers from two drawbacks. First, every cross-validation fold must fit a least-squares on each subset selected on the lasso path, which becomes computationally intensive once larger data sets are considered. Second, the method does not come with theoretical guarantees, an issue shared by most cross-validation procedures.

To address these problems, we propose Adaptive Validation for Prediction, (\avt), a novel variable selection scheme. A pseudo-code description of the algorithm is given as Algorithm \ref{sec:intro}. Our proposal is closely related to the recently introduced $\ell_\infty$-Adaptive Validation (\avi)~scheme \cite{Lederer14a}, which is based on tests inspired by isotropic versions of Lepski's method~\cite{Chichignoud_Lederer14, Lepski90, Lepski_Mammen_Spokoiny97}. This approach has been shown to provide fast and optimal calibration of the lasso for (one-step) estimation with respect to $\ell_\infty$-loss. For the two-step prediction considered in this paper, however, a considerably different and more technical approach inspired by non-isotropic tests is required.

As a practical example, Figure~\ref{fig:prediction_timings} compares  \avt\ and  standard methods on the Leukemia dataset discussed above. The methods under consideration to select the lasso tuning parameter are \avt, 10-fold cross-validation, Bayesian Information Criterion (lassoBIC), and an estimator obtained by selecting with BIC a least-square estimator over the supports geneterated by a the lasso path (lslassoBIC) following~\cite{Bellec16} (see the Appendix for further information about the implementation of the latter approach). As can be seen, \avt~is faster (39 seconds) to compute than cross-validation, and it is nearly as fast as the lassoBIC (40 seconds) and lslassoBIC (56 seconds). At the same time, it rivals the predictive performance of all competing approaches. Note at this point that the lassoBIC is a variable selection method rather than a predictive method; two goals that can be considerably different from each other.

The organization of this article is as follows. In the next section, we introduce the algorithm and prove that \avt~predictions satisfy an oracle inequality, that is, are optimal up to a constant factor. In Section~\ref{sec:experiments}, we show that on simulations, \avt\ is substantially faster than cross-validation while being comparable in accuracy.

All proofs are deferred to the Supplementary Material.

\begin{figure}[t]
\centering
\begin{minipage}{0.48\linewidth}
\includegraphics[width=6.1cm]{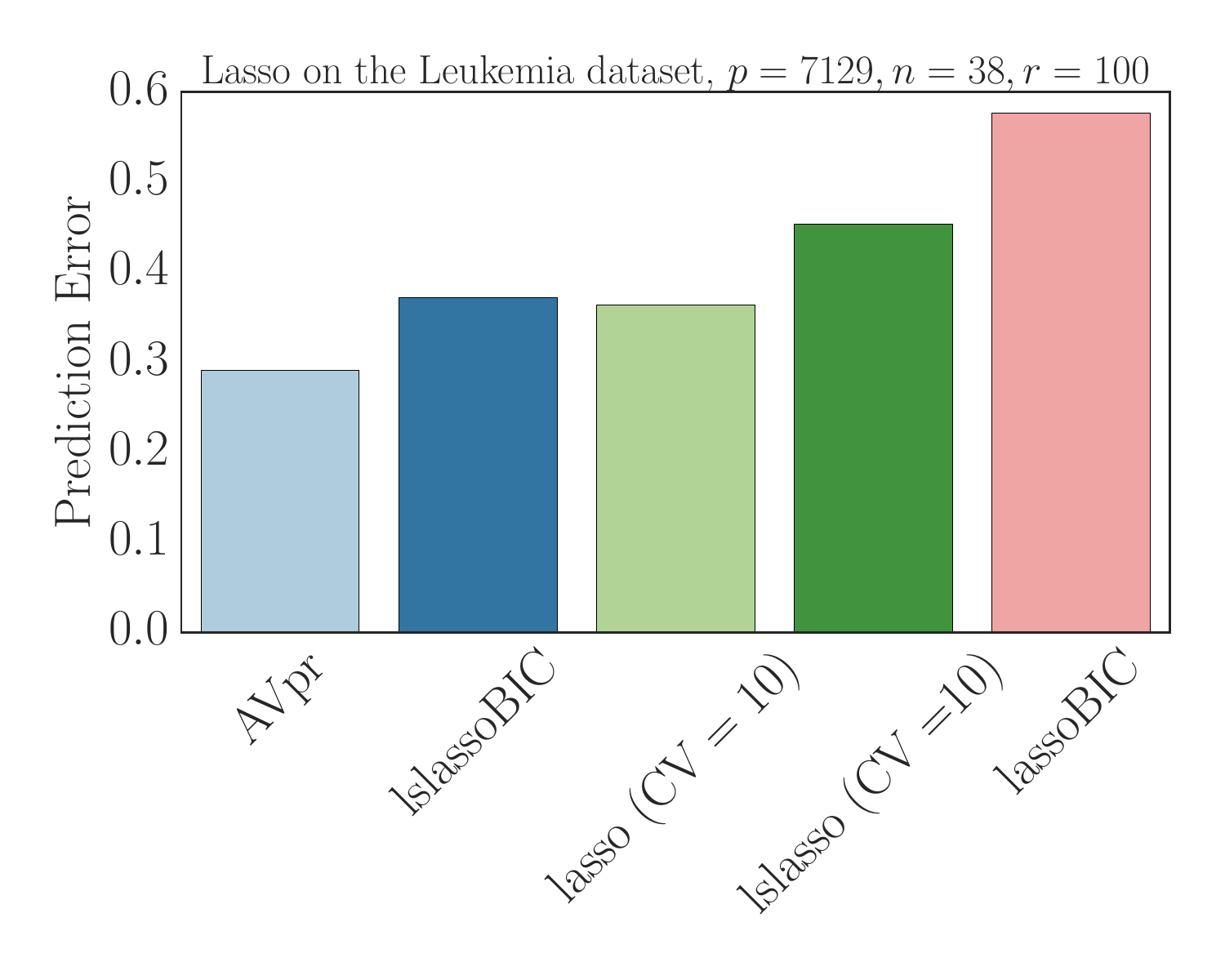}
\end{minipage}
\hspace{0.1cm}
\begin{minipage}{0.48\linewidth}
\includegraphics[width=6.1cm]{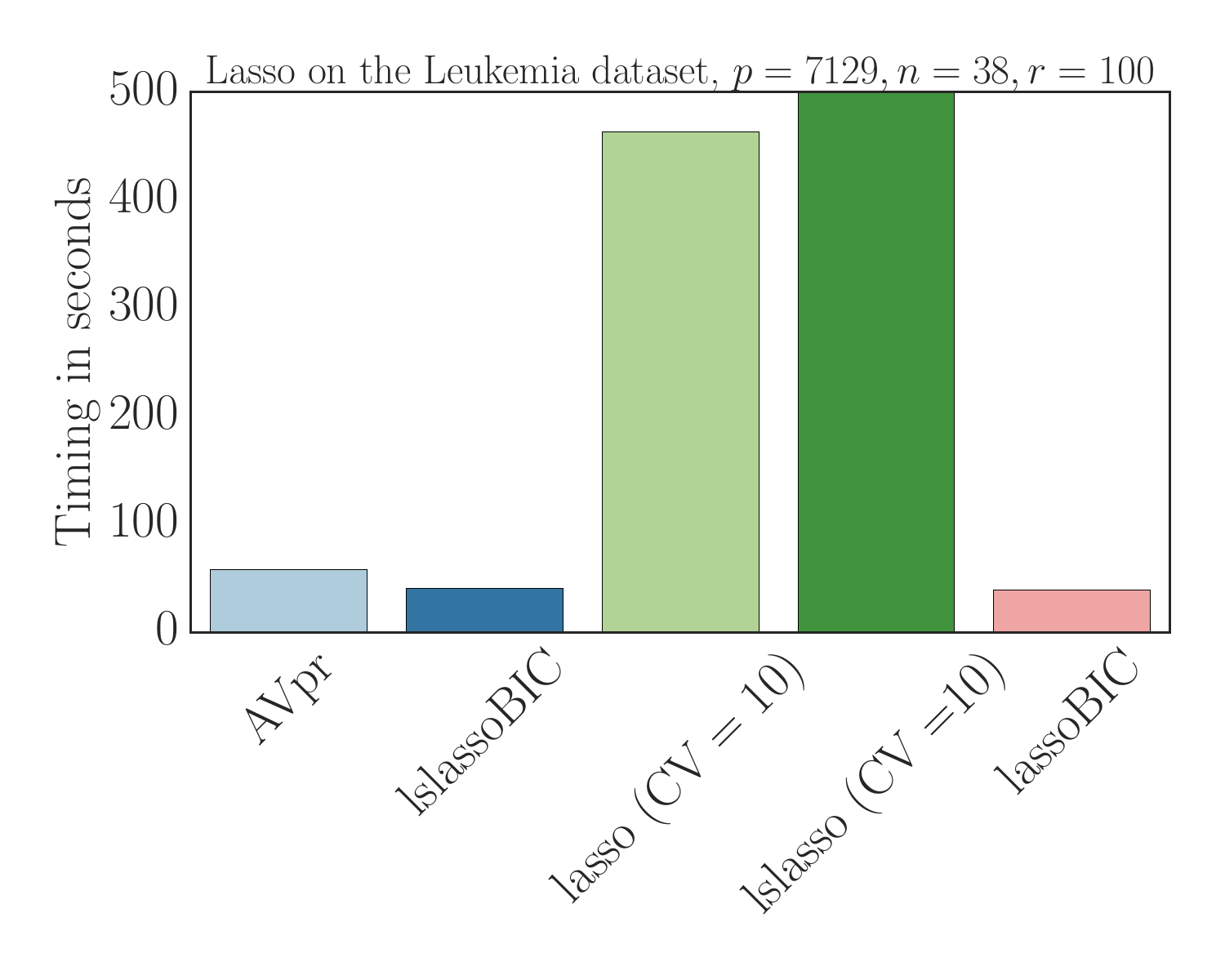}
\end{minipage}
\caption{Prediction error and computing times of the \avt, lslassoBIC \cite{Bellec16}, lassoCV, lslassoCV, and lassoBIC procedures. The bars represent the prediction error on the 34 left out observations. Note that the grid for the tuning parameter contains the same $50$ values for all methods.}
\label{fig:prediction_timings}
 \end{figure}
%!TEX root = ../TwoStepPrediction.tex

% {\color{red}\subsection*{Notation Defined for Us}
% Note that I have changed the notation of avp such that AV${}_\text{p}\to$ \avt\ (confusion with the p norm otherwise).\\
% Please use the command \textbackslash poi for our parameter of interest, not \textbackslash beta (in case we wanna change).\\
% Please use \textbackslash noise instead of \textbackslash epsilon/varepsilon.\\
% Please use \textbackslash tp for transpose.\\
% Please use \textbackslash boundforparameter for \boundforparameter.\\
% Just to be consistent: whenever we use supp, we should use square-brackets. Eg, $\supp[\beta].$}

\subsection*{Framework and Notation}
 Let us describe the framework and the notation. We are interested in linear regression models of the form
 \begin{align}\label{eq:model}
 Y=X\poi+\noise,
 \end{align}
where $Y\in\rn$ is the data, $X\in\R^{n\times p}$ the design matrix, $\poi\in\rp$ the regression vector, and $\noise\in \rn$ the random noise. For ease of exposition, we assume that the noise is Gaussian with unknown variance $\sigma^2$, that is $\noise\sim\mathcal N(0,\sigma^2)$. We assume that the columns of the design matrix $X_1,\dots,X_p\in\rn$ have been standardized to have Euclidean norm $\|X_j\|_2=\sqrt n$, but we otherwise allow for arbitrary correlations between the columns and noise distributions.  We are mainly motivated by (but not limited to) high-dimensional settings with sparse regression vectors, where the number of parameters $p$ can rival or even exceed the number of samples $n$. We finally denote by $S:=\operatorname{supp}[\poi]:=\{j \in [p]:\poi_j\neq 0\}$ the true support, whose cardinality is usually smaller than $n$ and $p$ (where throughout the paper $[d]$ stands for the set $\{1,\dots,d\}$).

A standard approach to find a vector $\hat \poi$ with small prediction loss $\|X\hat \poi -X\poi\|_2^2/n$ is performing a least-squares refitting to the lasso. After reducing the initially large set of variables to a small number of relevant ones, the subsequent refitting aims to lessen the bias associated with the lasso. For a fixed tuning parameter $\lambda$, the lasso~$\hat\poi^\lambda$ is defined via the minimization of objective function
\begin{align}&
\hat\beta^\lambda\in\underset{\theta\in\mathbb{R}^p}{\arg\min}\;\left\{\left\|Y-X\theta\right\|_2^2+2\lambda\|\theta\|_1\right\} \eqsp.
\label{def:lasso}
\end{align}
For simplicity, we will assume that the support of the minimizer equals the equicorrelation set (see Supplementary Material for details). The subsequent least-squares refitting is defined as a minimizer of
\begin{align}&
\bar\beta^\lambda\in\underset{\substack{\supp[\theta]=\,\supp[\hat\beta^\lambda]}}{\arg\min}\;\left\|Y-X\theta\right\|_2^2 \eqsp.
 \label{eq:refitting}
\end{align}
We call this estimator the least-squares lasso (lslasso). This two-step procedure is very popular as it has smaller bias than the lasso for a range of models~\citep{Belloni_Chernozhukov13, Lederer13}.

Our goal is to find optimal tuning parameters for the lslasso~\eqref{eq:refitting} in terms of prediction. In practice, only finitely many estimators can be computed. Therefore,  we consider finite sets of tuning parameters $\Lambda=\{\lambda_1,\dots,\lambda_r\}$, $r\in \mathbb{N}$ and the associated supports $(\hat S^1,\dots,\hat S^r),$ $\hat S^i:=\supp[\hat\beta^{\lambda_i}]$. We denote the collection of supports by $\mathcal S:=\{\hat S^i:~i\in [r]\}$. Finally, we introduce surrogate sets $\hat S^{i,j}:=\hat S^i\cup\hat S^j$ and corresponding estimators
\begin{align}\label{eq:estimator}
\br^{i,j}\in\argmin_{\text{supp}[\xi]\subset\hat S^{i,j}}\|Y-X\xi\|_2^2.
\end{align}
In the special case $i=j$, it holds that $\hat S^{i,j}=\hat S^i$, and hence, $\br^{i}:=\br^{i,i}=\br^{\lambda_i}.$

The lasso is only one out of many variable selection procedures. Our algorithms and derivations can be easily adapted to other procedures, such as the square-root lasso \citep{Antoniadis10, Belloni_Chernozhukov_Wang11, Bunea_Lederer_She13}, scaled-lasso variants \citep{Owen07,Stadler_Buhlmann_vandeGeer10,Sun_Zhang12} or thresholded ridge regression~\citep{Shao_Deng12}, combined with subsequent least-squares refitting. Note for instance that by one-to-one correspondence, our results also hold for the square-root lasso. However, due to its popularity, we focus here only on the lasso.

\subsection*{Related Literature}
Besides the references to the  papers that are most closely connected with our study, we provide some additional pointers to related literature. A discussion of multi-stage methods for regression can be found in~\cite{Wasserman09}. Approaches to tuning parameter calibration in the single-stage setting include~\cite{FloriNeuro11,Chatterjee15,Giraud2012high,Lederer14,Meinshausen_Buhlmann10,Sabourin15,Shah13}. Related papers that appeared recently include \cite{Wang15}, which contains an alternative to least-squares refitting,
and~\cite{Bellec16}, which discusses BIC-type selection as well as Q-aggregation approaches to model selection over the lasso path. The latter contains sparse oracle inequalities as well as prediction and estimation bounds under the standard restricted eigenvalue condition \cite{Bickel_Ritov_Tsybakov09} -  both for (a refitted) BIC-type procedure and for a Q-aggregation procedure. These methods enjoy similar theoretical guarantees as the ones we provide for \avt, and they are also subject to the same issue, namely, that a preliminary estimate of the noise level is required.

%!TEX root = ../TwoStepPrediction.tex

\let\relint\undefined
\let\cl\undefined

\newcommand{\Bl}{\mathcal B^\Lambda}
\newcommand{\cl}{\text{cl}\,}
\newcommand{\relint}{\text{relint}\,}
\newcommand{\sgn}{\text{sgn}\,}
\newcommand{\rk}{\text{rk}\,}
\newcommand{\intr}{\text{int}\,}

\newcommand{\oin}{{i^*}}
\newcommand{\ose}{{S^*}}
\newcommand{\obe}{{\beta^*}}

\section{\avt\ and Its Properties}\label{sec:lepski}

\subsection{The \avt\ Algorithm}
The \avt\ scheme is summarized in Algorithm~\ref{alg:avp}. As inputs, it takes the data $(Y, X)$, a set of tuning parameters $\Lambda$, and a constant $a>0$ specified in the following section. It then conducts simple tests along the tuning parameter path of the lasso until a stopping criterion is met. It returns the index of the current tuning parameter~$\overline{i}$ as well as the corresponding two-stage estimator~$\overline{\beta}^{\overline{i}}.$

% \newcommand{\enumjojo}{\upshape\hspace{2mm}(Feature \arabic*)}
% \newcommand{\enumjojot}{\setlength{\itemindent}{20mm}
% }
% \newcommand{\enumjojott}{\hspace{2mm}}
% We can already observe two key features of the algorithm:
% \begin{enumerate}[label=\enumjojo]
% \enumjojot
% \item \enumjojott\avt\ is fast;
% \item \enumjojott The implementation of \avt\ is simple.
% \end{enumerate}
The algorithm requires the computation of a single lasso path and least-squares estimators along this path. The computation of the paths can be conducted with readily available, easy-to-use, and highly efficient software such as glmnet (in R) or scikit-learn(in Python)~\cite{Friedman10,Pedregosa_etal11}.
% These computations could even be further accelerated, since the stopping rule in \avt\ allows to terminate the computation of the lasso path early.
For the computation of the least-squares estimators, off-the-shelf solvers can be used since the number of active variables of the second step is typically small.

In Section~\ref{sec:oracleinequality}, \avt\ is shown to satisfy an optimal finite sample prediction bound, and the practical performance of \avt\ is illustrated in Section~\ref{sec:experiments}.
% \begin{enumerate}[label=\enumjojo]
% \enumjojot
% \setcounter{enumi}{2}
% \item \enumjojott\avt\ is equipped with finite sample guarantees.
% \end{enumerate}
% Moreover, in Section~\ref{sec:experiments}, we test \avt\ on synthetic and real data, demonstrating that
% \begin{enumerate}[label=\enumjojo]
% \enumjojot
% \setcounter{enumi}{3}
% \item \enumjojott\avt\ provides accurate prediction on data.
% \end{enumerate}

\begin{algorithm}[t]
\caption{\avt}
% \setstretch{1.3}
%  \SetLine % For v3.9
  \SetAlgoLined % For previous releases [?]
\vspace{1mm}
 \KwData{$Y, X, \Lambda=\{\lambda_1,\dots,\lambda_r\}, a$}%\hat S^{i},\overline{\beta}^{i},\overline{\beta}^{i,j}~\forall i,j\in [r]$;}
 \KwResult{$\ein\in [r], \br\in\rp$}
Initialize index: $i \leftarrow 1$\\
Compute $\overline{\beta}^1,\ldots, \overline{\beta}^r$\\
If needed, re-sort the estimators such that $|\hat S^1|\leq\dots\leq|\hat S^r|$\\
\While{$i\leq r-1$\vspace{1mm}}{
Initialize stopping criterion: $TestFailure \leftarrow False$\\
% Compute $\overline{\beta}^i$\\
% Initialize comparisons: $j \leftarrow$ $\min\{k\in[i] : |\hat S^k|\geq |\hat S^i|\}$\\
Initialize comparisons: $j \leftarrow i+1$\\
\While{($j \leq r$) \and (TestFailure==False)}{
Compute $\hat{S}^{i,j}$ and $\overline{\beta}^{i,j}$\\
\eIf{$\|X\overline{\beta}^i-X\overline{\beta}^{i,j} \|_2^2\leq a|\hat S^{i}|+a|\hat S^{i,j}|$}{
$j\leftarrow j+1$\\}
{$TestFailure \leftarrow True$}
}
\eIf{$TestFailure==True$}{{$i\leftarrow i+1$}}{break}
}
Set output: $\ein\leftarrow i$ and $\br\leftarrow \br^{\ein}$%Discard new index $i\leftarrow i+1$\;
\vspace{0.2cm}
\label{alg:avp}
\end{algorithm}

\subsection{Assumption~\ref{ass:app}}

 Let us first introduce and motivate an assumption that ensures a certain stability of lasso solution. In general, if an estimator is unstable for data very close to the (noiseless) underlying truth, accurate estimation and inference hardly seem realistic. For the goal of refitting, we thus introduce an assumption that ensures the stability of supports. In the case of the lasso, this means that we restrict $X\beta$ from being too close to hyperplanes generated by the geometric arrangement of the columns in~$X.$ Figure~\ref{fig:zones} contains a schematic picture of this: $X\beta$ needs to lie outside of small neighborhoods (depicted in orange) around the black boundaries that represent the geometry of $X.$ Most importantly, we stress the assumption does not imply restrictions on the correlations of the design, and does not require estimated  supports to be accurate.

While the assumption concerns the model, it is most convenient to put the precise formulation in terms of the lasso itself. For this, recall that for a fixed~$X$, the support of the lasso evaluated at a vector $z\in\R^n$ is determined by which ``zone'' of $\R^n$ the vector $z$ falls into~\cite{Harris15,Tibshirani_Taylor12}.  These zones exactly correspond to the zones in Figure~\ref{fig:zones} that are separated by the black lines. Importantly, note that we do not require additional variable selection guarantees for the lasso, but merely that the selection is unambiguous. We now define
\begin{align*}
D&:~\R^n\to [0,\infty)\\
D(z)&:=\inf\{\|z-z'\|_\infty/\sqrt n\,: \,z'\in\R^n\text{ s.t. for some }\lambda\in\Lambda,
\\&\qquad \supp[\hat\beta^\lambda(z)]\not=\supp[\hat\beta^{\lambda'}(z')]\text{ for all }\lambda'\in\Lambda\}\ .
\end{align*}
\noindent The function $D$ quantifies how far away a signal can be from the zone boundaries. The factor $1/\sqrt n$ in the definition reflects our normalization of the design matrix. {We also stress that the function involves lasso solutions only at fixed, non-random vectors $z,z'$; in particular, $D$ is independent of~$\varepsilon$ and~$Y$.}

\begin{figure}[htbp]
  \centering
  \includegraphics[width=0.40\textwidth]{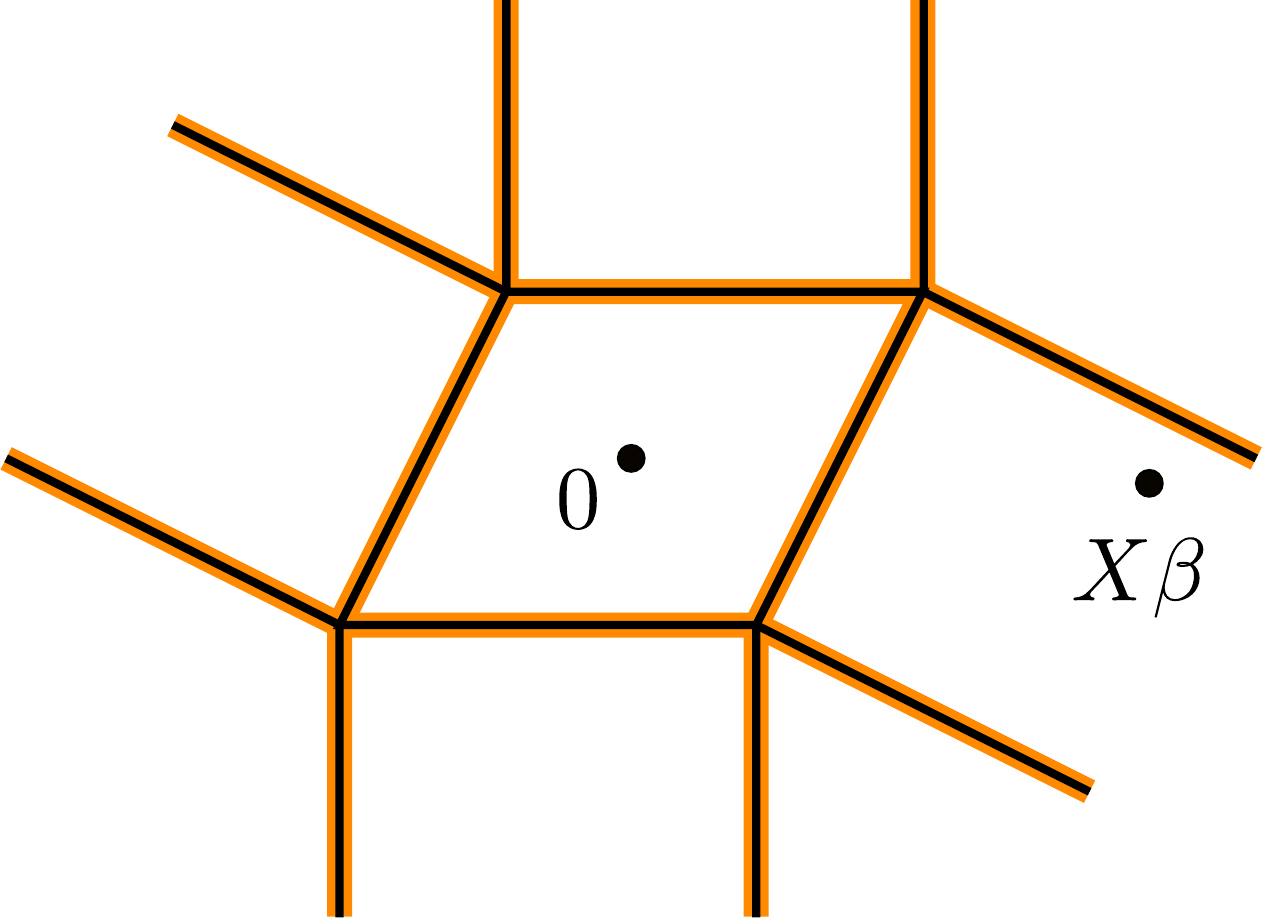}
  \caption{An illustration of Assumption~\ref{ass:app}: $X\beta$ needs to be separated from the boundaries of the zones that determine the active set of the lasso.}
\label{fig:zones}
\end{figure}
\begin{assumption}\label{ass:app} There is an integer $N$ such that for all $n\geq N$, it holds that
  \begin{equation*}
    D(X\beta)> \sqrt{\frac{6\sigma^2\log n}{n}}\ .
  \end{equation*}
\end{assumption}
\noindent This assumption now ensures that $X\beta$ is sufficiently far from the zone boundaries. {Note that the assumption is very different from restricted eigenvalues~\cite{Bunea_Tsybakov_Wegkamp07b} or similar hypothesis in the theory for the lasso~\cite{Dalalyan_Hebiri_Lederer17,vandeGeer_Buhlmann09}. While the latter assumptions need to be strict to ensure a good performance of the lasso, our assumption only requires that the estimates are unambiguous.} In the specific case where $X=\operatorname{I}_{n\times n}$, some insight can be obtained, since the quantity $D(\beta)$ can be computed. Indeed, with the convention that $|\beta_{(s)}|$ is the $s$-th largest amplitude of the vector $|\beta|$, $D(\beta)$ represents the smallest difference $|\beta_{(s)}| -|\beta_{(s+1)}|$, where $s$ is a support size of a Lasso solution applied on $\beta$ (\ie a soft-thresholded version of $\beta$) for a threshold $\lambda \in \Lambda$. In this case, the assumption represents bounding by below such differences, which makes it clear to be an assumption on the underlying signal itself.

To motivate this assumption further, we finally show that a slightly weaker version of Assumption~\ref{ass:app} automatically holds for all $X\beta$ up to a set of measure zero.
\begin{theorem}\label{thm:lasso-dpositive} For all $X\beta\in\R^n$ up to a set of Lebesgue measure zero, the lasso satisfies
  \begin{equation*}
 D(X\beta)>0\ .
  \end{equation*}
\end{theorem}
\noindent Theorem~\ref{thm:lasso-dpositive} does not completely exclude cases that violate Assumption~\ref{ass:app}. However, together with the above discussion, the result indicates that these cases are hardly generic and of limited relevance in applications. 

\subsection{Oracle Inequality}\label{sec:oracleinequality}
Oracle inequalities are bounds for the risk of an estimator. More precisely, they compare the risk of an estimator with the risk of an oracle estimator, an estimator that has knowledge of the best model~\cite{Buhlmann_vandeGeer11,Koltchinskii11}.

In this section, we show that our estimator~\avt\ satisfies such an oracle inequality. To this end, we first introduce the oracle set.

\begin{definition}[Oracle] The oracle set $\ose \in \mathcal S$ is the set $\ose :=\hat S^\oin$ with index
\begin{equation*}
\oin:=\min\big\{{i\in [r]} : \hat S^i\supset S\big\}\ .
\end{equation*}
and the associated oracle estimator is $\beta^*:=\overline{\beta}^{i^*}$.
\end{definition}
\noindent In other words, the oracle set contains the true support $S$ and has minimal cardinality among all such sets. The oracle set can therefore be viewed as the best possible approximation of $S$ in $\mathcal S$; in particular, $S^*=S$ whenever $S\in\mathcal{S}$.

{We implicitly assume that the oracle set exists, that is, the true support set is a subset of an estimated support along the path. However, one can easily generalize the definition to avoid this assumption. Let $S^*$ be an arbitrary set  and $P^*$ the projection onto the space spanned by the columns with indexes in $S^*.$ Adding this projection in our proofs (cf.~\eqref{eq:start-bd} for example) yields the same results as below except for an additional term $\|(\operatorname{I}-P^*)\beta\|^2_2$ in the bounds. However, as the above definition exists in generic cases (since the lasso supports tend to be very exhaustive for small tuning parameters), and as it provides a concise formulation of the results, we do not consider  the extended version in the following.}

Now, given the oracle, we can state a bound for the two-step lasso procedure with the \emph{optimal} tuning parameter, that is, the tuning parameter that leads to the oracle set. Throughout this section we invoke Assumption~\ref{ass:app}, {which helps us rule out ambiguous design settings.}

\begin{proposition}\label{prop:oracle-benchmarkshort} Under Assumption \ref{ass:app}, for any $\alpha>0$, there exist constants $t,N,R>0$ such that for all $n\geq N$ and $r\geq R,$ the oracle estimator satisfies with probability at least $1-\alpha$ the bound
% \jo{$\beta^*$ was not defined before. I believe it is $\overline{\beta}^{i^*}$}
\begin{align*}
\frac{\|X\obe-X\poi\|_2^2}n\;\leq\;&(1+t\log r)\frac{\sigma^2|\ose|}n\ .
\end{align*}
\end{proposition}
\noindent This is a bound for the lasso with refitting - under the assumption that the oracle set $S^*$ is known and incorporated in the selection of the tuning parameter. The constants $t,N,$ and $R$ are specified in the proof section.

In practice, we do not have access to the oracle set $S^*$. Therefore, we hope to find a procedure that does not require its knowledge and still satisfies the bound (up to constants) stated in Proposition~\ref{prop:oracle-benchmarkshort}. The following result shows that \avt\ provides this.

\begin{theorem}[Oracle inequality for \avt]\label{thm:oracle-inequalityshort}
If Assumption~\ref{ass:app} holds, for any $\alpha>0$, there exist constants $t,N,R>0$ such that for all $n\geq N$ and all $r>R$, our estimator \avt\ with $a\geq2\sigma^2(1+t\log r)$ satisfies with probability at least $1-\alpha$ the bounds

  \begin{align*}
    \tag{i}&~~~~~~~|\hat S |\leq |\ose|\\
  \text{and }\quad\tag{ii}&\frac{\|X\br-X\poi\|_2^2}n\;\leq\;\Big[6a+4\sigma^2(1+t\log r)\Big]\frac{|\ose|}n\ .
  \end{align*}
\end{theorem}
\noindent {This proves optimality of \avt: indeed, if $a\gtrsim2\sigma^2(1+t\log r),$ \avt\ satisfies the same bound (up to constants) as the two-step approach that is based on the knowledge of the oracle set~$S^*.$ Explicit constants can be found in the proofs section, though we did not attempt to optimize them.}

Theorem~\ref{thm:oracle-inequalityshort} holds for any sufficiently large $a.$ The question is now how to choose $a$ in practice. Theorem~\ref{thm:oracle-inequalityshort} entails precise guidance for this choice. In view of the bounds, one should select the smallest~$a$ that is still allowed, that is, $a=2\sigma^2(1+t\log r).$ However, since $\sigma^2$ is typically unknown in practice, we suggest to replace it with a rough estimate~$\hat\sigma$. Moreover, we argue that the term \jrs{$2(1+t\log r)$} is an artifact of our proof technique rather than a fundamental aspect of the bound. We thus suggest the simple choice \jrs{$a=\hat\sigma^2$}, see the empirical section below. Consequently, the bounds above provide a solid theoretical foundation for~\avt; however, there is still a gap between theory and practice that deserves to be studied further.

We note that our approach is very different from just replacing the unknown noise variance in the existing theoretical tuning parameters. Standard oracle inequalities for the lasso hold true with probability $t$ for tuning parameters of the form $\text{const}_t\,\sigma\sqrt{(\log p)/n}$, where $\text{const}_t$ is a factor involving the level $t$, see~\cite{Buhlmann_vandeGeer11} and references herein. Thus, one might be tempted to use these tuning parameters with an estimate of~$\sigma$. However, the above form is valid only for Gaussian noise, while we aim at more general calibration.  Moreover, even for Gaussian noise, the above form is known to be suboptimal both in the near orthogonal case (because $p$ could be replaced by $p/s,$ where $s$ is the true sparsity level) and in the correlated case (where much smaller tuning parameters might be favored), we refer to \cite{Buhlmann_vandeGeer11,Dalalyan_Hebiri_Lederer17,Hebiri_Lederer13} and references therein. Finally, even if the above form were optimal in terms of the standard oracle inequalities for prediction, estimation, and variable selection, there are no guarantees on their performance in terms of refitting.

%!TEX root = ../TwoStepPrediction.tex

% Johannes: all shortcuts should be in the .sty file
\newcommand{\told}[1]{\textcolor{green}{#1}}
\newcommand{\tnew}[1]{\textcolor{red}{#1}}
\newcommand{\rem}[1]{\textcolor{red}{\xout{#1}}}
\newcommand{\tsupernew}[1]{\textcolor{blue}{#1}}

\section{Experiments}\label{sec:experiments}

\subsection{General Setup}

We measure the numerical performance and the computational speed of \avt\ in two-step prediction. The methods of comparison are cross-validation with 2, 5, 10, and 20 number of folds, which are typically regarded as the standard calibration schemes. 

Variable selection is performed with the lasso. We emphasize that the motivation of this work is not to compare different variable selection methods, but instead, to compare different calibration schemes in two-step prediction.

% \begin{figure}[t!]
% \centering
% % \includegraphics[width=1.50\linewidth]{LassoNormal_n100_p200_s10_sigma2_rho0pt5_a1_r20}
% \includegraphics[width=0.95\linewidth]{Ridge_thNormal_n200_p1000_s10_sigma1_rho0pt5_a1_r10}%
% \caption{Performance of various schemes to select the thresholding parameter of thrr:
% 10-fold CV (thrrCV), 10-fold CV on refitted version of thrr (lsthrrCV),
% and our proposed scheme (lasso\avt~\!\!)
% % with constant $a=\hat\sigma^2$ (with $\hat\sigma$ computed as in \cite{Sun_Zhang13}).
%  Common criteria
% (prediction error, estimation error, false/true positive)
% are investigated for 10 thresholding values (as detailed in Section~\ref{subsec:grid}).
% }
% \label{fig:thrrNormalExamples}
% \end{figure}

\newcommand{\sigtol}{\ensuremath{\delta}}
\newcommand{\sigest}{\ensuremath{\widehat{\sigma}}}

\begin{algorithm}[t]
\DontPrintSemicolon
% \setstretch{1.3}
%  \SetLine % For v3.9
  \SetAlgoLined % For previous releases [?]
\vspace{1mm}
 \KwData{$Y, X, \sigtol$}%\hat S^{i},\overline{\beta}^{i},\overline{\beta}^{i,j}~\forall i,j\in [r]$;}
 \KwResult{$\sigest$}
Initialize tuning parameter and variance: $\lambda_0\leftarrow\sqrt{2n\log p}$ and $\sigest\leftarrow1$\\
%Compute $\hat\beta^\lambda$ as the lasso with tuning parameter $\lambda$ as in \eqref{def:lasso};\\
%Initialize $\sigest\leftarrow \|Y-X\hat\beta^\lambda\|_2/\sqrt{n}$;\\
\Repeat{$|\sigest-\sigest'|\leq \sigtol$\vspace{1mm}}{
Save $\sigest'\leftarrow \sigest$\\
Update $\sigest$:\\
~~~~Set $\lambda \leftarrow \sigest\lambda_0$\\
~~~~Compute $\hat\beta^\lambda$ as the lasso with tuning parameter $\lambda$ according to \eqref{def:lasso}\\
~~~~Set $\sigest \leftarrow \|Y-X\hat\beta^\lambda\|_2/\sqrt{n}$\\}
\caption{Scaled lasso algorithm with early stopping, cf. \cite{Sun_Zhang12}}
\label{algosqrt}
\end{algorithm}

The data are generated according to a linear regression model as in~\eqref{eq:model} with  $n=p=100,200$.
% and a varying
% number of parameters $p\in\{100,500\}$.\jrs{to modify in the end}
% \told{The first~$10$ entries of the regression vector~$\beta$ are set to $1$, all other entries to $0$.}
The first~$10$ entries of the regression vector~$\beta$ are set to~$1$, while all other entries are set to~$0$.
% \yoyo{Please connect numbers and vectors with a tilde to the rest...}
The components of the noise vector are independently sampled from a univariate standard normal distribution with mean~$0$ and variance~$1$.
The rows of the design matrix~$X$ are independently sampled from a multivariate normal distribution with mean~$0$
% \told{and covariance matrix $\Sigma$ with $\Sigma_{ij}=0.5$ for $i\neq j$ and $\Sigma_{ij}=1$ for $i = j$}
and covariance matrix~$\Sigma$ that is set to $\Sigma_{ij}=1$ for $i = j$ and to $\Sigma_{ij}=\rho$ for $i\neq j$ with~$\rho=0.5$. Subsequently, the columns of~$X$ are normalized to Euclidean norm~$\sqrt n$. For all experiments, we perform~$50$ repetitions.

In addition to the described parameter settings, we tested various other settings, including different
correlation coefficients~$\rho$, regression vectors~$\beta$, and tuning parameter grids.
% \told{As the conclusions were virtually the same, we restricted our presentation to the parameter values stated above.}
As the conclusions were similar across all settings, we restrict our presentation to the ones described. All computations are conducted with the standard implementations of the lasso from Python \textit{scikit-learn} (version 0.16)~\cite{Pedregosa_etal11}, and our code is available at \texttt{https://github.com/josephsalmon/AVp}.

\subsection{Practical Choice of $a$} {We follow the suggestions after Theorem~\ref{thm:oracle-inequalityshort}. Specifically, if} $\sigma^2$ is known,
we recommend using Algorithm~\ref{alg:avp} with $a=\sigma^2$
as suggested by Theorem~\ref{thm:oracle-inequalityshort}
(regarding the term $t\log r$ as a superfluous term coming from our proof technique). In practice, however, the noise variance~$\sigma^2$ is often unknown. We then advocate using $a=\sigest^2$ with a rough estimate~$\sigest^2$ of~$\sigma^2.$ Such a rough estimate can be easily obtained by using a (very) small number of iterations of the algorithms for the square-root lasso~\cite{Bunea_Lederer_She13} or
the scaled lasso~\cite{Sun_Zhang13}. For our simulations, we have opted for the latter, which consists of an alternating minimization for estimating
both the regression parameter and the noise level. Algorithm~\ref{algosqrt} states our concrete implementation. We set the tolerance to $\sigtol=10^{-2}$,  which typically leads to less than five iterations of the loop in the algorithm and therefore, as illustrated below,  to very low computational costs.

\subsection{Choice of the Tuning Parameter Grids}\label{subsec:grid} The tuning parameter grid is chosen as a default grid in the lasso function in \textit{scikit-learn}. More precisely, we take a geometric grid of size $r=100$  starting from $\lambda_{\max}=\|X^\top Y\|_\infty$,  the smallest tuning parameter that leads to a lasso solution of all zeros, and ending at $\lambda_{\max}/1000$.
%% the formula used:
%np.logspace(np.log10(alpha_max * eps), np.log10(alpha_max),num=n_alphas=r)

%\subsection{Thresholded Ridge Regression: Selecting  $\lambda$}

% For thresholded ridge regression, the tuning parameter $\gamma$ is set to $\gamma=\sqrt{p}$, and the size of the grid for $\lambda$ is set to $r=10$. The values on the $\lambda$-grid
%  are chosen such that the corresponding estimators have $p/50,\;2p/50,\;\dots,$ or $10p/50$ non-zero components.

\subsection{Computational and Statistical Performance}

We first report in Figure~\ref{fig:timings}  the computational times for each of the following:\\

\begin{tabular}{l l}
lasso path:&
\begin{minipage}[t]{0.6\columnwidth}
Computation of one tuning parameter path of lasso.\vspace{1.5mm}
\end{minipage}\\
lslasso path:& \begin{minipage}[t]{0.6\columnwidth}
Computation of one tuning parameter path of lasso with least-squares refitting.\vspace{1.5mm}
\end{minipage}\\
lasso\avt:&
\begin{minipage}[t]{0.6\columnwidth}
Least-squares refitted lasso with tuning parameter selected by \avt~with $a=\sigest^2$ as detailed above.\vspace{1.5mm}
\end{minipage}\\
lslassoBIC: &
\begin{minipage}[t]{0.6\columnwidth}
Least-squares refitted lasso with tuning parameter selected by a BIC-type procedure \cite{Bellec16}, detailed in Appendix~\ref{sec:lassobic}.\vspace{1.5mm}
\end{minipage}\\
lassoCV:&
\begin{minipage}[t]{0.6\columnwidth}
Least-squares refitted lasso with tuning parameter selected by cross-validation on the estimates of the (one-step) lasso.\vspace{1.5mm}
\end{minipage}\\
lslassoCV:&
\begin{minipage}[t]{0.6\columnwidth}
Least-squares refitted lasso with tuning parameter selected by cross-validation on the estimates of least-squares refitted lasso.\vspace{1.5mm}
\end{minipage}
\end{tabular}\\
We then also report in Figure~\ref{fig:box_plots} the prediction performances of the last three methods.

% \noindent We then also consider the corresponding estimators for thresholded ridge regression (thrr). A comparison of \avt~with and without the estimation of~$\sigma$ shows that the computational time for the rough estimation of the noise level~$\sigma^2$ is negligible. Figure~\ref{fig:timings} also shows that \avt~is about  $10$~times faster than 10-fold cross-validation. In some cases, the computation of \avt~is even faster than the computation of a single lasso path, because \avt~can stop early (that is, does not have to consider all supports $\hat S^{i,j}$, see Definition~\ref{def:lepski} and Algorithm~\ref{alg:1}). The statistical performance in terms of prediction error for lasso\avt~(with $\sigest$), lassoCV, and lslassoCV are reported in Figure~\ref{fig:box_plots} and for thrr\avi{ }and lsthrrCV in Figure~\ref{fig:box_plots_thrr}.

\begin{figure}[t]\centering
\includegraphics[width=0.95\linewidth]{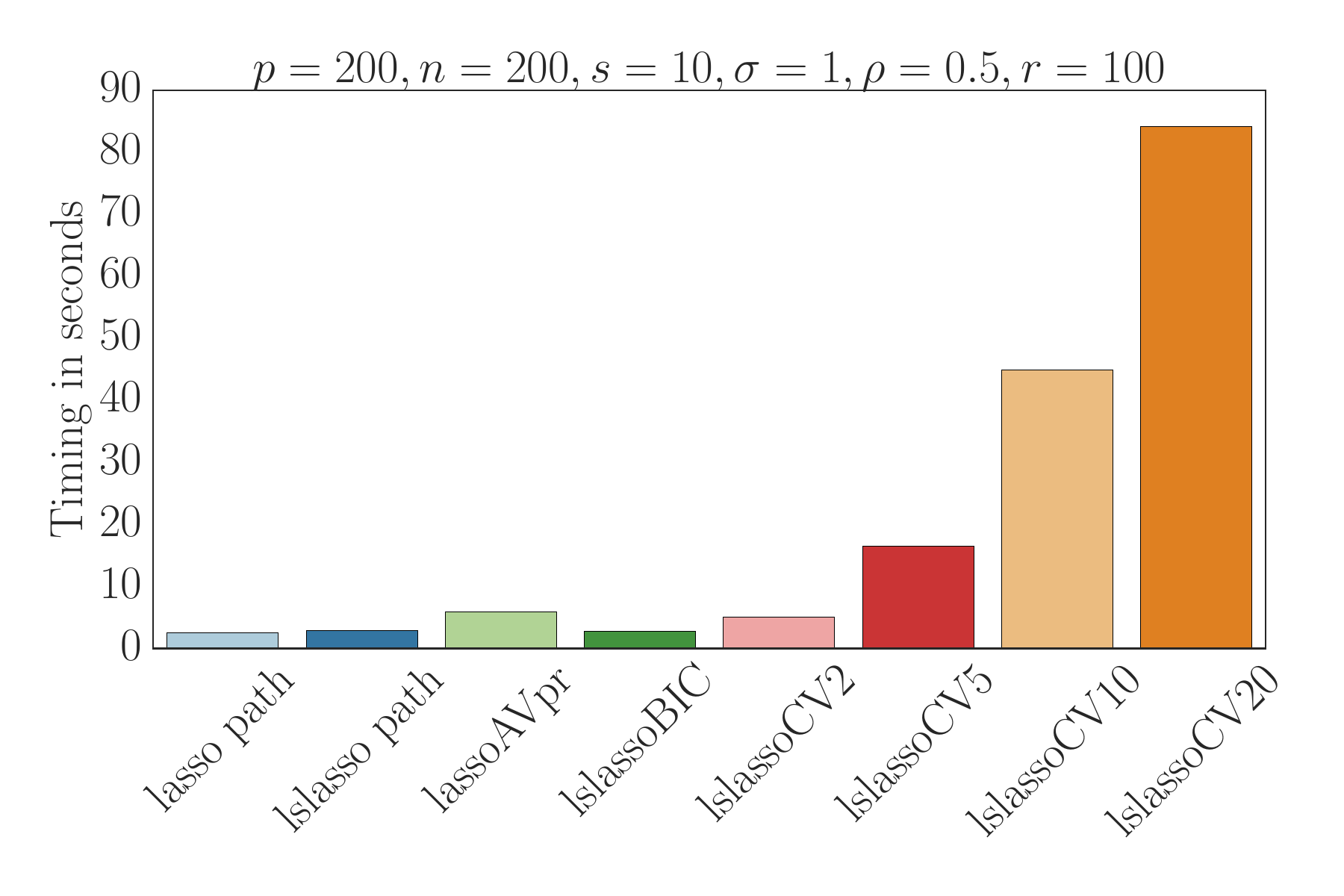}
\caption{Computation times of the lasso path, lslasso path, lslassoCV, and lasso\avt~(with $\sigest$). Cross-validation is performed using a refitting step (lslassoCV) for different numbers of folds.}
\label{fig:timings}
 \end{figure}

In conclusion, our simulations demonstrate that \avt~is competitive both in computational speed and in prediction performance.

\begin{figure}[h]\centering
\includegraphics[width=11.7cm]{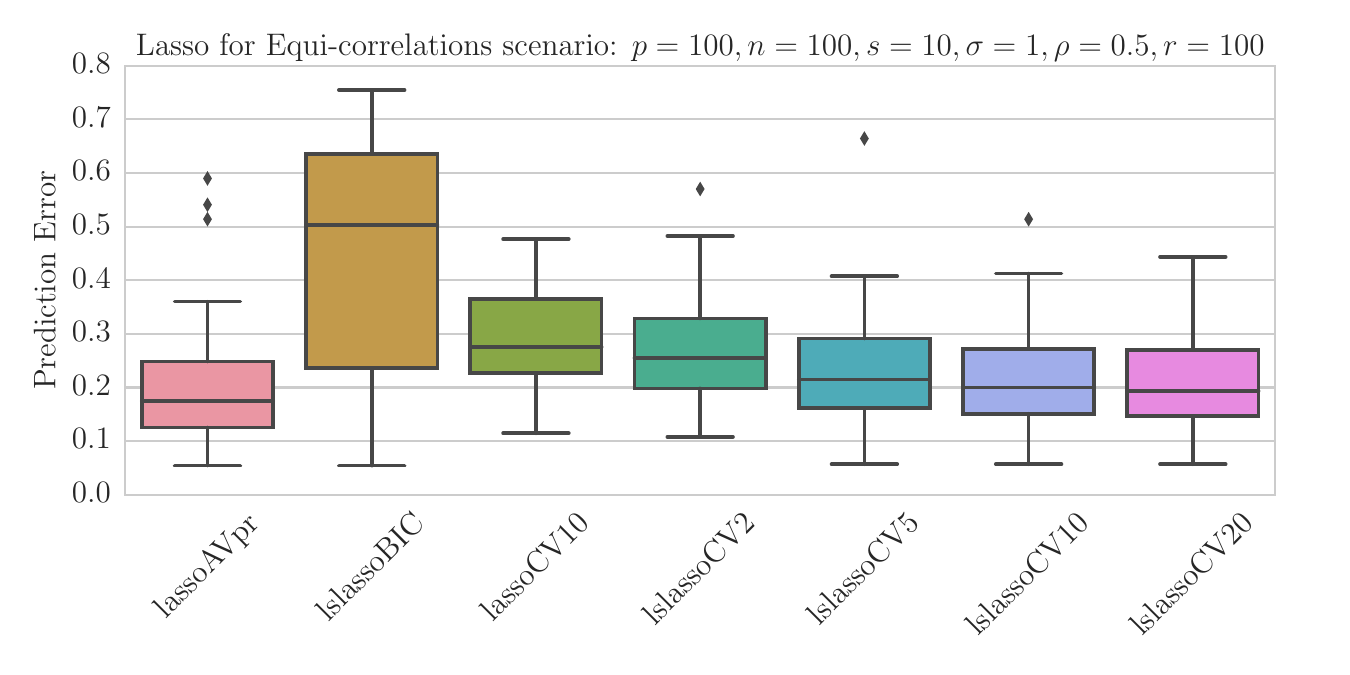}%
\hspace{0.1cm}
\includegraphics[width=11.7cm]{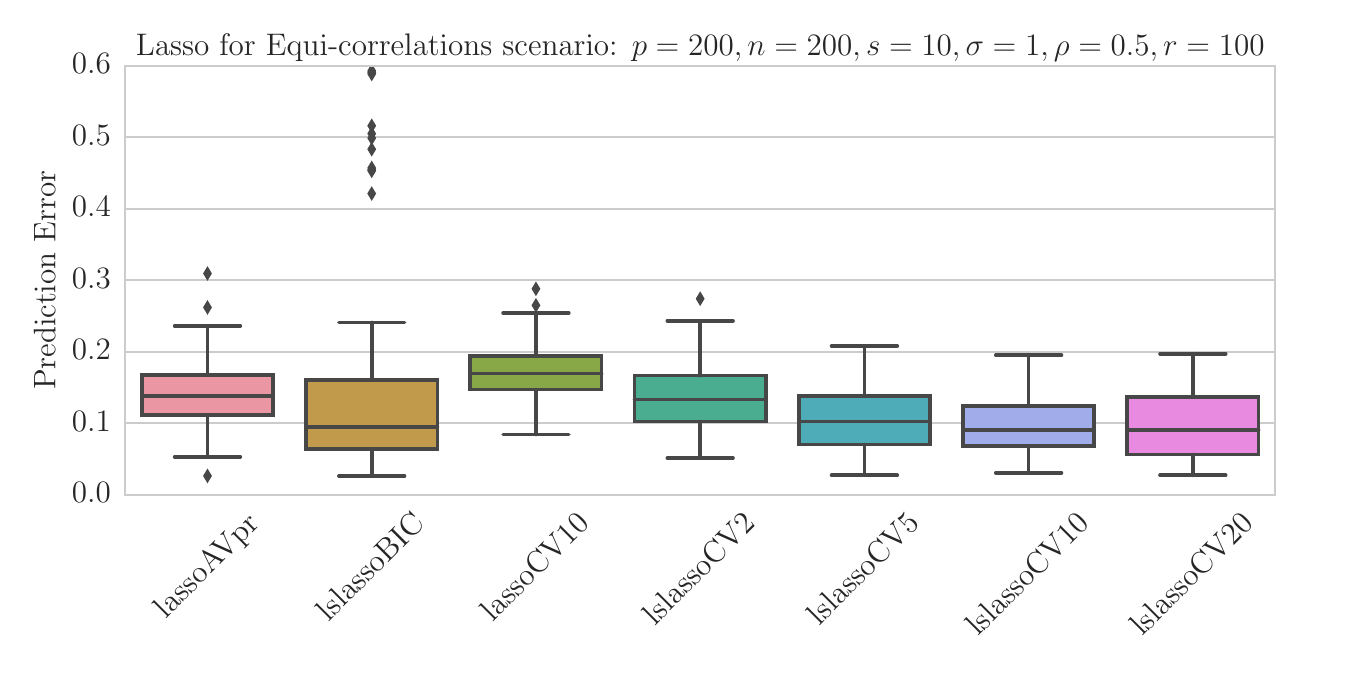}%
\caption{Prediction errors of the lassoCV, lslassoCV, and lasso\avt~(with $\sigest$). Cross-validation is performed using (lslassoCV) or not using (lassoCV) a refitting step and using different numbers of folds.}
\label{fig:box_plots}
\end{figure}

%!TEX root = ..\TwoStepPrediction.tex
%!TEX root = ../TwoStepPrediction.tex

\section{Discussion}\label{sec:conclusions}
The standard scheme for calibrating the lasso is cross-validation. However, cross-validation entails two main deficiencies: it is computationally inefficient and lacks finite sample guarantees. In contrast, \avt\ is fast and satisfies optimal bounds for prediction with the refitted lasso. We therefore propose \avt\ as an alternative scheme for prediction with lasso followed by refitting. {Moreover, our work can be readily extended to the square-root lasso and to ridge regression with thresholding.}

{A direction for further research could be sharpening the theoretical bounds. The current result contains a term that grows logarithmically in the number of tuning parameters under consideration. Such artifacts are common in the non-parametric literature.  In a sense, one might consider our approach as a non-parametric version of~\cite{Chichignoud_Lederer14}.  Thus, it is not surprising that such a term appears.  Empirically, however, there are no indications that this term is needed. An improved understanding of the bound could especially strengthen the connections between the theoretical and the practical choice of $a$.}

The exact specification of the method is an issue that appears more generally in  tuning parameter calibration. In our case, there is flexibility in how to estimate the noise variance; in cross-validation, one has to specify the number of folds; in BIC-type approaches, the constant in front of the $\log$ penalty needs to be adjusted; in Q-aggregation, there is a trade-off between the KL regularization and the quadratic term. We believe that although out of the scope of this contribution, a comprehensive sensitivity analysis studying the selections in each of the methods would be of interest.

\section*{Acknowledgements}
{We thank Micha\"el Chichignoud for the insightful remarks and for the inspiring discussions, and we thank Pierre Bellec for providing us with valuable insights about numerical aspects of his work. We also thank the editors and reviewers for the valuable comments that have improved the paper.}

\bibliographystyle{plainnat}
\bibliography{references_all}

%\section*{Acknowledgements}
% %Thank you all for....
%!TEX root = ../TwoStepPrediction.tex

\appendix
\section{Proofs}\label{sec:proofs}

%%%%%%%%%%%%%%%%%%%%%%%%%%%%%%%%%%%%%%%%
\subsection{Definitions}
%%%%%%%%%%%%%%%%%%%%%%%%%%%%%%%%%%%%%%%%

A subtlety in the definition of the lasso variable selection scheme is that it is
defined as the solution to the minimization problem \eqref{def:lasso}, but the solution is not necessarily unique.
Different lasso algorithms can yield different solutions to the lasso problem, and all could reasonably called the lasso
estimator.

Define the lasso equicorrelation set~\cite{Tibshirani_Taylor12} to be
\begin{align*}
E[\lambda;Y]:=\left\{i\in[p]\, : \,|X_i^\top(Y-X\hat\beta^\lambda)|=\lambda\right\}.
\end{align*}
This set is unique and contains the support of any lasso solution $\hat\beta^\lambda$. In common cases, there is at least one lasso solution $\hat\beta^\lambda$ whose support equals the equicorrelation set. This property turns out to be quite valuable in our analysis and consequently, we will always assume from now on that the support set equals the equicorrelation set.

Denote the sets of lasso outputs that lead to the same sign vectors $\eta\in\{1,0,-1\}^p$ by $W^\lambda(\eta) :=\{Y\in\R^n \,:\, \sgn[\hat\beta^\lambda]=\eta\}$, cf.~\cite{Lee15}. The closures of the sets of equal sign vector will be called regions, and the collection of all regions will be written $\mathcal{V}=\{\cl W^1(\eta)\, : \, \eta\subset\{-1,0,1\}^p\}$. To simplify notation, we will write the target as $\xi=X\beta$. When relevant, this will also be written $\xi_n$ to emphasize the dependence on $n$.

%%%%%%%%%%%%%%%%%%%%%%%%%%%%%%%%%%%%%%%%
\subsection{Lemmas}
%%%%%%%%%%%%%%%%%%%%%%%%%%%%%%%%%%%%%%%%

Recall that in convex geometry, a polyhedron is a finite intersection of closed half-spaces -- more details can be found in Appendix \ref{sec:proofs}.

%--------------------------------------------------------------------------------------------------------
\begin{lemma}\label{lem:lasso-regularity}
The lasso $\hat{\poi}$ fulfills the following:
\begin{enumerate}%[label=\roman*)]
\item it is scale-symmetric, in the sense that $\supp[\hat\beta^\lambda(Y)] =\supp[ \hat\beta^1(Y/\lambda)]$ for all $\lambda\in\Lambda$, $Y \in \R^n$, and $X\in \R^{n \times p}$;
\item for all $\lambda\in\Lambda$ and $\eta\in\{1,0,-1\}^p$, the closure of its regions of equal sign vector $\cl[\mathcal W^\lambda(\eta)]$ are polyhedra.
\end{enumerate}
\end{lemma}
%--------------------------------------------------------------------------------------------------------
%--------------------------------------------------------------------------------------------------------
\begin{proof}[Proof of Lemma \ref{lem:lasso-regularity}]
We prove each condition in order.

\noindent\textit{Part i)}
The scale-symmetry follows from consideration of the dual problem to (\ref{def:lasso}). Let $\hat\beta_\lambda$
be a lasso solution whose active set equals the equicorrelation set $E[\lambda;Y]$.
Let $C_\lambda$ stand for the polyhedron $\{x\in\R^p : \, \|X^\top x\|_\infty\leq\lambda\}$,
and let $P_{C_\lambda}$ denote the Euclidean projection on this set. Notice that for any $x\in C_\lambda$,
we have $x/\lambda\in C_1$ and therefore
\begin{align*}
\|\lambda P_{C_1}\left(\frac{Y}{\lambda}\right)-Y\|_2
\leq
\lambda \|P_{C_1}\left(\frac{Y}{\lambda}\right)-\frac{Y}{\lambda}\|_2
\leq
\lambda \|\frac{x}{\lambda}-\frac{Y}{\lambda}\|_2
\leq
\|x-Y\|_2.
\end{align*}
Since this is true for all $x\in C_\lambda$, and that $\lambda P_{C_1}\left(\frac{Y}{\lambda}\right)\in C_\lambda$,
we conclude that $\lambda P_{C_1}\left(\frac{Y}{\lambda}\right)=P_{C_\lambda}(Y)$.
As shown in \cite{Tibshirani_Taylor12}, the residual from the lasso satisfies
\begin{align*}
Y-X\hat\beta_\lambda=P_{C_\lambda}(Y).
\end{align*}
Therefore, for any $\lambda>0$ the active set of $\hat\beta_\lambda,$ which is the equicorrelation set here, satisfies $E[\lambda;Y]=E\left[1;Y/\lambda\right]$ since
\begin{align*}
E[\lambda;Y]
=\left\{i\in[p] : \big|X_i^\top P_{C_\lambda}(Y)\big|\!=\!\lambda\right\}
=\left\{i\in[p] : \left|X_i^\top P_{C_1}\left(\frac{Y}{\lambda}\right)\right|\!=1\!\right\},
\end{align*}
as desired.

\noindent\textit{Part ii)}
The polyhedron $C_1=\{x\in\rn\,:\,\|X^\top x\|_\infty\leq 1\}$ has an irreducible decomposition into half-spaces
\[C_1\!=\!\left(\bigcap_{i=1}^p\{x\in\rn : X_i^\top x-1\leq\!0\} \right)\bigcap
\left(\bigcap_{i=1}^p\{x\in\rn : -X_i^\top x-1\leq\!0\}\right),\]
so the facets of $C_1$ are $F_{i}^{\pm}=C_1\cap \{x\in\rn : \pm X_i^\top x-1\leq 0\}$ -- see \cite[Sec.\ 2.6]{Grunbaum03}.
Since by assumption, the active set of the lasso estimate coincides with the equicorrelation set,
by the Karush-Kuhn-Tucker conditions and $Y-X\hat\beta_1=P_{C_1}Y$, see \cite[Equations\ (13)-(14) and Lemma\ 3]{Tibshirani_Taylor12}, we have
\begin{align*}
X_i^\top P_{C_1}(Y)=\sgn\hat\beta_i\in\{-1,1\}\;\Leftrightarrow\; P_{C_1}Y\in F_i^{\sgn\hat\beta_i}
\end{align*}
for $i\in E[1;Y]$. Moreover,
\begin{align*}
|X_i^\top P_{C_1}(Y)|<1\;\Leftrightarrow\; P_{C_1}Y\in \big( C_1\backslash F_i^+\big)\cap \big( C_1\backslash F_i^-\big)
\end{align*}
for $i\not\in E[1;Y]$. In light of this, $\sgn\hat\beta=\eta$ if and only if
\begin{align*}
P_{C_1}(Y)&\in
\text{\raisebox{5pt}{$\bigcap_{\substack{i\in [p]\\\eta_i=1}}$}}F_{i}^+
\cap
\text{\raisebox{5pt}{$\bigcap_{\substack{i\in [p]\\\eta_i=\text -1}}$}}F_{i}^-
\cap
\text{\raisebox{5pt}{$\bigcap_{\substack{i\in [p]\\\eta_i\neq 0}}$}}
C_1\backslash F_i^+\cap C_1\backslash F_i^-
\\
&~~~~~=
\relint \Bigg[\Bigg(
\text{\raisebox{5pt}{$\;\bigcap_{\substack{i\in [p]\\\eta_i=1}}$}}
F_{i}^+
\Bigg)\cap\Bigg(
\text{\raisebox{5pt}{$\;\bigcap_{\substack{i\in [p]\\\eta_i=\text -1}}$}}
F_{i}^-\Bigg)\Bigg].
\vspace{-10pt}
\end{align*}
So, $W(\eta )=P_{C_1}^{-1}(\relint F_\eta)$ for the face $F_\eta=\left(\bigcap_{\substack{i\in [p]\\\eta_i=1}}F_{i}^+\!\right)
\cap  \left(\bigcap_{\substack{i\in [p]\\\eta_i=\text -1}}F_{i}^-\!\right)$.

Let $V\in\mathcal{V}$ - then by definition of $\mathcal{V}$, there must be an $\eta \subset\{-1,0,1\}^p$ such that $V=\cl W(\eta)=\cl P_{C_1}^{-1}(\relint F_\eta )$.
According to \cite[Equation~(2.3) and Page~83]{Schneider13},
we have $N(C_1,F_\eta)+\text{relint}\,F_\eta=P_{C_1}^{-1}(\text{relint}\,F_\eta)=W(\eta)$, where $N(C_1,F_\eta)$ is the normal cone of $C_1$ to the face $F_\eta$.

Now, since $\cl A+\cl B\subset \cl(A+B)$ for any sets $A$, $B$,
\begin{align*}
N(C_1,F_\eta)+\text{relint}\,F_\eta\subset N(C_1,F_\eta)+F_\eta\subset\text{cl}\,P_{C_1}^{-1}(\text{relint}\,F_\eta).
\end{align*}
But $N(C_1,F_\eta)+F_\eta$ is the sum of two polyhedra, hence a polyhedron, hence closed. Thus $N(C_1,F_\eta)+F_\eta=\text{cl}\,P_{C_1}^{-1}(\text{relint}\,F_\eta)=V$ is a polyhedron.

Notice that $N(C_1,F_\eta)+\relint F_\eta$ is convex as a sum of convex sets, and thus
\begin{align*}
\relint W(\eta)=&\relint P_C^{-1}(\relint F_\eta)=\relint (N(C_1,F_\eta)+\relint F_\eta)\\
=&\relint\cl(N(C_1,F_\eta)+\relint F_\eta)=\relint V=\intr V,
\end{align*}
since $V$ is $n$-dimensional.
% Reference: http://www.math.tau.ac.il/~shiri/teaching/convex/notes.pdf, Corollary 2.18 if necessary.
Thus $\intr V\subset W(\eta)$, and $\sgn\hat\beta$ is constant over $\intr V$, as desired.
\end{proof}
%--------------------------------------------------------------------------------------------------------

%--------------------------------------------------------------------------------------------------------
\begin{theorem}\label{thm:lepski} For any $t\geq6$, there exists a deterministic integer~ $N$ such that with probability at least $1-t^{-1}-R_n^{-1}\sqrt{1+t\log R_n}$,
\begin{align}
\big\|X\bar\beta^{i,j}-X\beta\big\|_2^2\leq\sigma^2(1+t\log r)|\hat S^{i,j}|
+\left\|(\operatorname{I}_n-P_{\hat S^{i,j}})X\beta\right\|_2^2
\label{eq:UB}
\end{align}
for all $i,j$ and all $n\geq N$.
\end{theorem}
%--------------------------------------------------------------------------------------------------------
%--------------------------------------------------------------------------------------------------------
\begin{proof}[Proof of Lemma \ref{thm:lepski}]

By \eqref{eq:model}, the loss of these estimators can be broken down as
\begin{align}&
\left\|X\br^{i,j}-X\beta\right\|_2^2=
\left\|P_{\hat S^{i,j}}(X\br^{i,j}-X\beta)\right\|_2^2
+\left\|(\operatorname{I}_n-P_{\hat S^{i,j}})(X\br^{i,j}-X\beta)\right\|_2^2
\notag\\&\qquad\quad=\;
\left\|P_{\hat S^{i,j}}(XX^+_{\hat S^{i,j}}Y-Y+\varepsilon)\right\|_2^2
+\left\|(\operatorname{I}_n-P_{\hat S^{i,j}})(XX^+_{\hat S^{i,j}}Y-X\beta)\right\|_2^2
\notag\\&\qquad\quad=\;
\left\|P_{\hat S^{i,j}}\varepsilon\right\|_2^2
+\left\|(\operatorname{I}_n-P_{\hat S^{i,j}})X\beta\right\|_2^2,
\label{eq:start-bd}
\end{align}
where $X^+_{\hat S^{i,j}}$ is the Moore-Penrose pseudo-inverse of the submatrix of $X$ comprising the columns with indexes in $\hat S^{i,j}.$ Our goal is to control the noise term $\|P_{\hat S^{i,j}}\varepsilon\|_2^2$ in \eqref{eq:start-bd}. We do this in four parts. We first show that on an appropriate scale, the response $Y$ must be close to the target $X\beta$ with high probability. This is then shown to imply, using the scale-symmetry property of the lasso, that the ordered active sets must be unique with high probability. This allows us to control the noise term by showing that the projected noise behaves like a chi-square distribution, construct an appropriate event, and bound its probability. Finally, on this event the inequality of the theorem is shown to hold.

\begin{enumerate}%[label=\emph{Part \roman*)},ref=\emph{\roman*)},
                  %align=left,leftmargin=0pt,itemindent=40pt]
% - - - - - - - - - - - - - - - - - - - - - - - - - - - - - - - -
\item\label{par:UB-1} We first use a Gaussian tail bound to show that $Y$ is close to the target $X\beta$; precisely, we show that the event
  \begin{equation*}
\bar\Omega_1:=\bigcap_{n=n_1}^\infty\left\{\|Y-{X\beta}\|_\infty\leq \sqrt{6\sigma^2\log n}\right\}
  \end{equation*}
fulfills the bound
\begin{equation}
\mpr\left[\bar\Omega_1\right]\geq 1-1/t,\label{partone}
\end{equation}
where $n_1:=\min \{n \,:\,\allowbreak \sqrt{6\sigma^2\log n}>1/\sqrt{2\pi}\}\vee\lceil 4t\rceil$.\\
For this, write $\xi_n:=X\beta$ (the subscript $n$ highlights the sample size dependence) and define the event $\Omega_n:=\left\{\|Y-\xi_n\|_\infty\leq \sqrt{6\sigma^2\log n}\right\}$ for ease of notation. Using a union bound, the Gaussian tail bound
$\mpr[\text{N}(0,1)>t]\leq e^{-t^2/2}/\sqrt{2\pi} t\allowbreak<e^{-t^2/2}$ (note that $t>6>1/\sqrt{2\pi}$), and  the definition of $n_1$, we find that for the complements $\Omega_n^C$ of the sets $\Omega_n$,
\begin{align*}
\sum_{n=n_1}^\infty\mpr\left[\Omega_n^C\right]&
\leq \sum_{n=n_1}^\infty 2ne^{-3\log n}
=2\sum_{n=n_1}^\infty \frac1{n^2}\\
&\leq \frac{2}{n_1^2}+2\int_{n_1}^\infty\frac1{w^2}dw
%\\&
=\frac{2}{n_1^2}+\frac{2}{n_1}\leq\frac 4 {n_1}
\leq \frac 1 t.
\end{align*}
From this and the definition of $\bar\Omega_1$, the bound~\eqref{partone} follows.
% - - - - - - - - - - - - - - - - - - - - - - - - - - - - - - - -
\item\label{par:UB-2} We now use Part \ref{par:UB-1} to deduce that the active sets are deterministic if the response is close to the target, or more specifically, we derive that on $\bar\Omega_1$,
\begin{equation}\label{parttwo}
\hat S^i[Y]=\hat S^i[\xi_n]
\end{equation}
for $1\leq i\leq r$.\\
From Part \ref{par:UB-1}, we deduce that on $\bar\Omega_1$ and for $n\geq n_1$,
\begin{align*}
\frac{\|Y-\xi_n\|_\infty}{\sqrt{n}}\leq \sqrt{\frac{6\sigma^2\log n}{n}}=\frac{\sqrt{6\sigma^2\log n /n}}{D(\xi_n)}D(\xi_n).
\end{align*}
\noindent Then, by Assumption \ref{ass:app},
there must be an $n_2$, without loss of generality satisfying $n_2\geq n_1$, such that $\|Y-\xi_n\|_\infty/\sqrt{n}<D(\xi_n)$.
But by definition of $D(\xi_n)$, this means that for all $\lambda>0$,
there must be a $\lambda'>0$ such that $\supp[\hat\beta^1(Y/\lambda)]=\supp[\hat\beta^1( \xi_n/\lambda')]$. For a given $y\in\rn$, we now define the collection of active sets by $\hat Q[y]:=\{\supp[\hat\beta^1(y/\lambda)]:\lambda>0\}$. That is, $\hat Q[Y]$ is the collection of active sets along the tuning parameter path of the estimator for given data $(Y,X)$. {There are at most $2^p$ different subsets of $[p]$, so these are always finite sets.} We therefore obtain
\begin{align*}
\mpr\left[\hat Q[Y]=\hat Q[\xi_n]\;\;;\;\;\forall n\geq n_2\;\,:\,\bar\Omega_1\right]=1.
\end{align*}
In particular, the random cardinality $r:=|\hat Q[Y]|$ and deterministic cardinality $\bar r:=|\hat Q[\xi_n]|$ coincide almost surely on this event, that is,
\begin{align*}
\mpr\left[r=\bar r\;\;;\;\;\forall n\geq n_2\;\,:\,\bar\Omega_1\right]=1.
\end{align*}
Because we follow a fixed ordering rule, we must then have \begin{align*}
\mpr&\Big[(\hat S^1[Y],...,\hat S^r[Y])
=(\hat S^1[\xi_n],...,\hat S^r[\xi_n])\;\;\;\;\forall n\geq n_2\,:\,\bar\Omega_1\Big]=1.
\end{align*}
This finishes the proof of Equation~\eqref{parttwo}.
% - - - - - - - - - - - - - - - - - - - - - - - - - - - - - - - -
\item\label{par:UB-3} Let us define the sets $\hat S^{i,j}:=\hat S^{i,j}[Y]:=\hat S^i[Y]\cup\hat S^j[Y]$, the random ranks $r^{i,j}:=\rk X_{\hat S^{i,j}}$ and deterministic ranks $\bar r^{i,j}:=\rk X_{\hat S^{i,j}[\xi_n]} $.
Our next step is to show that Part \ref{par:UB-2} provides a chi-square bound for the noise part in~\eqref{eq:start-bd}, that is, we prove on $\bar \Omega_1$ the relations
\begin{align}
    \label{partthree}
\|P_{\hat S^{i,j}}\varepsilon\|_2^2\sim{\sigma^2}\chi^2_{\bar r^{i,j}}\text{~if~} \bar r^{i,j}\geq 1\text{~~and~~}\|P_{\hat S^{i,j}}\varepsilon\|_2^2=0\text{~if ~}\bar r^{i,j}=0.
  \end{align}
To show this, we apply \eqref{eq:start-bd} to our two-step method $\bar\beta^{i,j}$ to get
{\begin{align*}
\big\|X\bar\beta^{i,j}-X\beta\big\|_2^2=\left\|P_{\hat S^{i,j}}\varepsilon\right\|_2^2
+\left\|(\operatorname{I}_n-P_{\hat S^{i,j}})X\beta\right\|_2^2.
\end{align*}}
According to Part \ref{par:UB-2}, the sets $\hat S^{i,j}$ satisfy
\begin{align*}
\mpr\left[\hat S^{i,j}=\hat S^{i,j}[\xi_n]\;\; \;\;\forall n\geq n_2\;\,:\,\bar\Omega_1\right]=1
\end{align*}
for all $1\leq i,j\leq r$. This has two consequences. First, the random ranks $r^{i,j}$ equal the deterministic ranks $\bar r^{i,j}$ almost surely on the event $\bar\Omega_1$:
\begin{align*}
\mpr\Big[r^{i,j}=\bar r^{i,j}\;\;\;\;\forall n\geq n_2\;\,:\,\bar\Omega_1\Big]=1.
\end{align*}
Second, the matrices $P_{\hat S^{i,j}}$ are indeed projection matrices  on $\bar\Omega_1$ of rank $r^{i,j}$, since on this event the active sets $\hat S^{i,j}$ (and, therefore, the subspaces spanned by $X_{\hat S^{i,j}}$) are constant. Formally,
\begin{align*}
\mpr\left[P_{\hat S^{i,j}}=P_{\hat S^{i,j}[\xi_n]}\;\;\;\;\forall n\geq n_2\;\,:\,\bar\Omega_1\right]=1.
\end{align*}
\noindent Combining this with $\varepsilon\sim \mathcal N(0,\sigma^2\operatorname I_n)$ yields the results in~\eqref{partthree}. We can now control the noise part in~\eqref{eq:start-bd} with a chi-square Chernoff bound. More specifically, we obtain the bound
\begin{equation}\label{partfour}
\mpr[\bar\Omega_2]\geq {1-t^{-1}-R_n^{-1}\sqrt{1+t\log R_n}}
\end{equation}
for $\bar\Omega_2:=\bar\Omega_1\cap\left\{\|P_{\hat S^{i,j}}\varepsilon\|_2^2\leq (1+t\log r)\bar r^{i,j}\text{ for all }1\leq i,j\leq r\right\}$. To this end, recall that by definition, the integers $\bar r^{i,j}=|\hat S^{i,j}[\xi_n]|$ and $\bar r=|\hat Q[\xi_n]|$ are deterministic. According to Part \ref{par:UB-2}, it also holds that on $\bar\Omega_1$, we have $\bar r^{i,j}=|\hat S^{i,j}[Y]|$ and $\bar r=r$.  We use this, result \eqref{partthree}, a union bound, and the bound $\mpr[\bar\Omega_1]\geq 1 - 1/t$ stated in \eqref{partone} to deduce
\begin{align*}
\mpr\left[\bar\Omega_2\right]
=&
\mpr\left[\bar\Omega_1 \cap\left\{
\raisebox{10pt}{$\max\limits_{\substack{1\leq i,j\leq \bar r\\\bar r^{i,j}\geq 1}}$}
\frac{\chi^2_{\bar r^{i,j}}}{\bar r^{i,j}}\leq 1+t\log \bar r\right\}\right]
\\
\geq&
1-t^{-1}-\sum_{\substack{1\leq i,j\leq \bar r\\\bar r^{i,j}\geq 1}}
\mpr\Bigg[\chi^2_{\bar r^{i,j}}>\bar r^{i,j}(1+t\log \bar r)\Bigg].
\end{align*}
{Now, using the chi-square Chernoff bound $\mpr[\chi^2_k>k(1+a)]<[(1+a)e^{-a}]^{k/2}$, we obtain
\begin{align*}
\mpr\left[\bar\Omega_2\right]&\geq\;
1-t^{-1}-\sum\limits_{\substack{1\leq i,j\leq \bar r\\\bar r^{i,j}\geq 1}}
\left[(1+t\log \bar r)e^{-t\log\bar r}\right]^{\bar r^{i,j}/2}.
\end{align*}
As $\bar r\geq1$, $(1+t\log \bar r)e^{-t\log\bar r}\leq1$ and so
\begin{align*}
\mpr\left[\bar\Omega_2\right]&\geq\;
1-t^{-1}-\sum\limits_{\substack{1\leq i,j\leq \bar r\\\bar r^{i,j}\geq 1}}
\left[(1+t\log \bar r)e^{-t\log\bar r}\right]^{1/2}\\
&\geq\;
1-t^{-1}-\bar r^{2-t/2}\sqrt{1+t\log \bar r}.
\end{align*}
We now use that that $2-t/2\leq-1$ for all $t\geq6$  and the fact that $\bar r\geq R_n$ to find that
\begin{align*}
\mpr\left[\bar\Omega_2\right]\geq 1-t^{-1}-R_n^{-1}\sqrt{1+t\log R_n},
\end{align*}
which concludes Part \ref{par:UB-3}.}
% - - - - - - - - - - - - - - - - - - - - - - - - - - - - - - - -
\item\label{par:UB-4} We finally collect the pieces to deduce that with probability at least {$1-t^{-1}-R_n^{-1}\sqrt{1+t\log R_n}$}, the bound
\begin{align}
\big\|X\bar\beta^{i,j}-X\beta\big\|_2^2\leq\sigma^2(1+t\log r)|\hat S^{i,j}|
+\left\|(\operatorname{I}_n-P_{\hat S^{i,j}})X\beta\right\|_2^2
\label{partfive}
\end{align}
holds for all $n\geq n_2$.\\
For this, we assume that indeed $n\geq n_2$ and then combine the initial bound~\eqref{eq:start-bd} and the results of Part \ref{par:UB-3} to find that on $\bar\Omega_2$,
\begin{align*}
\big\|X\bar\beta^{i,j}-X\beta\big\|_2^2
&\leq\sigma^2(1+t\log r)\bar r^{i,j}
+\left\|(\operatorname{I}_n-P_{\hat S^{i,j}})X\beta\right\|_2^2.
\end{align*}
Recalling that $\bar r^{i,j}=|\hat S^{i,j}|$ according to Part \ref{par:UB-2}, the desired bound \eqref{partfive} now follows from Inequality~\eqref{partfour} derived in Part \ref{par:UB-3}.
\end{enumerate}
\end{proof}
%--------------------------------------------------------------------------------------------------------

%%%%%%%%%%%%%%%%%%%%%%%%%%%%%%%%%%%%%%%%
\subsection{Proofs of Results From Section 2}
%%%%%%%%%%%%%%%%%%%%%%%%%%%%%%%%%%%%%%%%

Recall that the path of active sets is $|\hat S^1|\leq|\hat S^2|\leq \dots\leq |\hat S^r|$. The cardinality $r$ is typically random, but we can always bound it almost surely by some constant $\boundforparameter$, so that $1\leq \boundforparameter\leq r$ almost surely. This constant should be independent of the data but can be chosen to vary with $n$ and $p$. For example, we might have agreed \textit{a priori} with considering $50$ sets, or our variable selection method might always yield at least $\min(n,p)$ different sets by construction.

%--------------------------------------------------------------------------------------------------------
\begin{corollary}[Oracle benchmark]\label{cor:oracle-benchmark} Say the oracle set exists, that the design satisfies Assumption~\ref{ass:app}.  Then, for any $t\geq6$, there exists a deterministic integer $N$ such that with probability at least $1-t^{-1}-\boundforparameter^{-1}\sqrt{1+t\log \boundforparameter}$, the oracle estimator satisfies
\begin{align*}
\|X\obe-X\beta\|_2^2\;\leq\;&\sigma^2(1+t\log r)|\ose|
\end{align*}
for all $n\geq N$.
\end{corollary}
%--------------------------------------------------------------------------------------------------------
%--------------------------------------------------------------------------------------------------------
\begin{proof}[Proof of Corollary \ref{cor:oracle-benchmark}]
The oracle estimator is $\beta^*=\br^{i*}$, the refitted estimator on the oracle set $S^*=\hat S^{i^*}$. The result therefore follows immediately from Theorem \ref{thm:lepski} applied to $i=j=i^*$.
\end{proof}
%-------------------------------------------------------------------------------------------------------

%--------------------------------------------------------------------------------------------------------
\begin{corollary}[Oracle inequality for \avt]\label{cor:oracle-inequality} Say the oracle set exists, that the design satisfies Assumption~\ref{ass:app}, and that the $AV_p$ parameter $a$ is such that $a\geq 2\sigma^2(1+t\log r)$.  Then, for any $t\geq6$, there exists a deterministic integer $N$ such that with probability at least $1-t^{-1}-\boundforparameter^{-1}\sqrt{1+t\log \boundforparameter}$, it holds that $|\hat S |\leq |\ose|$ and
\begin{align*}
  \|X\br-X\beta\|_2^2\;\leq\;&\Big[6a+4\sigma^2(1+t\log r)\Big]|\ose|\eqsp
\end{align*}
for all $n\geq N$.
\end{corollary}
%--------------------------------------------------------------------------------------------------------
%--------------------------------------------------------------------------------------------------------
\begin{proof}[Proof of Corollary \ref{cor:oracle-inequality}]

By Theorem \ref{thm:lepski}, there exists an $N$ such that the event
\begin{align*}
\Omega=&\bigg\{
\big\|X\bar\beta^{i,j}-X\beta\big\|_2^2\leq\sigma^2(1+t\log r)|\hat S^{i,j}|\\
&~~~~~+\left\|(\operatorname{I}_n-P_{\hat S^{i,j}})X\beta\right\|_2^2, \forall i,j\text{ and }n\geq N\bigg\}
\end{align*}
holds with probability at least $1-t^{-1}-R_n^{-1}\sqrt{1+t\log R_n}$.

~\\\noindent\textit{Claim (i): On $\Omega$, it holds that $\ein\leq \oin=\min\big\{{i\in[r]}\big|\hat S^i\supset S\big\}$.}\\
We prove this claim by contradiction and, therefore, assume that $\ein > \oin$. Then, by the definition of our estimator, there must be an integer $k\in[r]$ such that $|\hat S^k|\geq|\ose|$ and
\begin{equation}
  \|X{\obe}-X\overline{\beta}^{k , \oin}\|_2^2
   > a |\ose|+a|\hat S^k\cup \ose|\eqsp.\label{eq:opt-1}
\end{equation}
The fact that $|\hat S^k|\geq|\ose|\geq|S|$, together with the bound~\eqref{eq:UB} and $\ose\supset S$, yields
\begin{align*}&
\|X{\obe}-X\overline{\beta}^{k, \oin} \|_2^2
\\\leq&\;
2\|X\obe-X\beta\|_2^2+2\|X\beta-X\br^{k , \oin} \|_2^2
\\ \leq&\;
2\sigma^2(1+t\log r)|\ose|+2\|(\operatorname{I}_n-P_{\ose})X\beta\|_2^2+2\sigma^2(1+t\log r)|\hat S^k\cup \ose|\\
&+2\|(\operatorname{I}_n-P_{\hat S^k\cup \ose})X\beta\|_2^2
\\ \leq &\;
2\sigma^2(1\!+\!t\log r)|\ose|+2\|(\operatorname{I}_n\!-P_{S})X\beta\|_2^2+2c|\hat S^k\cup\ose|+2\|(\operatorname{I}_n\!-P_{S})X\beta\|_2^2
\\ =&\; 2\sigma^2(1+t\log r)|\ose|+0+2\sigma^2(1+t\log r)|\hat S^k\cup \ose|+0 \eqsp.
\end{align*}
Since $a\geq 2\sigma^2(1+t\log r)$, this contradicts (\ref{eq:opt-1}) and, therefore, concludes the proof of Claim (i).

~\\\noindent\textit{Claim (ii): On $\Omega$, it holds that $\|X\br-X\beta\|_2^2\leq (6a+4\sigma^2(1+t\log r))|\ose|$.}\\
To prove this claim, we note that by Claim 1, we have $\ein \leq \oin$ and, therefore, $|\se|\leq |\ose|$. Hence, the definition of the estimator implies for $\ein = r$ that
\begin{equation*}
  \|X\br-X\br^{\ein,\oin}\|_2^2=\|X\br-X\br^{\ein,\ein}\|_2^2=0
\end{equation*}
and otherwise, if $\ein < r$, that (recall that $\ose\supset S$)
\begin{equation*}
  \|X\br-X\br^{\ein,\oin}\|_2^2\leq a |\se|+a |\se \cup \ose|\leq 3a|\ose|\eqsp.
\end{equation*}
The bound~\eqref{eq:UB}, on the other hand, yields
\begin{equation*}
    \|X\br^{\ein,\oin}-X\beta\|_2^2\leq \sigma^2(1+t\log r)|\se \cup\ose|+0\leq 2\sigma^2(1+t\log r)|\ose|\eqsp.
\end{equation*}
Combining these two inequalities, we finally obtain
 \begin{align*}
    \|X\br-X\beta\|_2^2&\leq 2 \|X\br-X\br^{\ein,\oin}\|_2^2+2 \|X\br^{\ein,\oin}-X\beta\|_2^2\\
&\leq (6a+4\sigma^2(1+t\log r))|\ose|\eqsp,
  \end{align*}
which concludes the proof of Claim (ii).
\end{proof}
%--------------------------------------------------------------------------------------------------------

In particular, this yields the results of Section \ref{sec:lepski}.

%--------------------------------------------------------------------------------------------------------
\begin{proof}[Proof of Proposition \ref{prop:oracle-benchmarkshort}]
Let $t=\max(\frac2{\alpha},6)$, and let $R$ be large enough so that \[t^{-1}+R^{-1}\sqrt{1+t\log R}\;\leq\; \frac{\alpha}2+\sqrt{\frac1{R^2}+\frac2{\alpha}\frac{\log R}{R^2}}\;\leq\; \alpha.\]
Using Corollary \ref{cor:oracle-benchmark} with $t$ and $R_n=R$ gives the result.
\end{proof}
%-------------------------------------------------------------------------------------------------------

\begin{proof}[Proof of Theorem \ref{thm:oracle-inequalityshort}]
Let $t=\max(\frac2{\alpha},6)$, and let $R$ be large enough so that \[t^{-1}+R^{-1}\sqrt{1+t\log R}\;\leq\; \frac{\alpha}2+\sqrt{\frac1{R^2}+\frac2{\alpha}\frac{\log R}{R^2}}\;\leq\; \alpha.\]
Using Corollary \ref{cor:oracle-inequality} with $t$ and $R_n=R$ gives the result.
\end{proof}

%%%%%%%%%%%%%%%%%%%%%%%%%%%%%%%%%%%%%%%%
\subsection{Proof of Theorem \ref{thm:lasso-dpositive}}
%%%%%%%%%%%%%%%%%%%%%%%%%%%%%%%%%%%%%%%%

The proof of Theorem \ref{thm:lasso-dpositive} rely on various convex geometry notions, so we first remind the reader of some background on the subject.

The affine hull of a set $A$, denoted $\text{aff}\,A$ is the intersection of all affine spaces that contain $A$, or alternatively, the unique affine set of minimal dimension that contains $A$.
We write, by extension, $\text{dim}\,A=\text{dim}\,\text{aff}\,A$. We denote the interior, closure, and boundary of $A$ by $\text{int}\,A$, $\text{cl}\,A$, and $\partial A$, respectively.
The relative interior and boundary of $A$, denoted  $\text{relint}\,A$ and $\text{relbd}\,A$, are respectively the
interior and the boundary when $A$ is seen as a subset of its affine hull. We write $A\subset B$ if $A$ is a (not necessarily strict) subset of $B$.

A half-space $H^+$ is a set of the form $\{x\in\R^n\,:\,\alpha^\top x\leq b\}$ for $\alpha\in \R^n$, $b\in \R$.
Its boundary $H=\partial H^+=\{x\in\R^n\,:\,\alpha^\top x=b\}$ is a hyperplane in $\R^n$.
A polyhedron is a finite intersection of half-spaces, $X=\bigcap_{i\in I} H^+_i$.
 Such decompositions are usually not unique;
we call a decomposition irreducible if $\bigcap_{j\not=i} H^+_j\not=X$ for all $i\in I$. Given an irreducible decomposition, a facet of $X$ is a set $F_i=X\cap H_i$. The faces are the intersections of (potentially many) facets.
The normal cone to a point $x_0\in X$ is $N(X,x)=\{y\,:\,y^\top(x-x_0)\leq0\text{ for all }x\in X\}$. One can show \cite[Page\ 83]{Schneider13} that for a given face $F$, all $x_0\in F$ have the same normal cone; hence we define the normal cone to $F$ to be $N(X,F)=N(X,x_0)$ for any $x_0\in F$.

Next, we make the following remarks. Recall that $\mathcal{V}$ is the collection of regions, namely the closures of sets of points that have the same sign vector under the lasso.

By Lemma \ref{lem:lasso-regularity}, then the regions must have disjoint interiors. Indeed, for $V\not=V'\in\mathcal{V}$ we must have $V=\cl W^1(\eta)$, $V'=\cl W^1(\eta')$ for some $\eta\not=\eta'\subset\{-1,0,1\}^p$.
Since $\sgn\hat\beta$ is constant on $\intr V$, we conclude that $\intr V\subset W^1(\eta)$, $\intr V^{\prime}\subset W^1(\eta')$. But $W^1(\eta)\cap W(\eta')=\{Y\,:\, \eta=\sgn\hat\beta^1=\eta'\}=\emptyset$, so $\intr V\cap V^{\prime}=\emptyset$.

Take $V\not=V'\in\mathcal{V}$ again: being polyhedra, their boundaries $\partial V$, $\partial V'$ can be partitioned by the relative interiors of their proper faces \cite[Theorem 2.1.2]{Schneider13}. Let $\mathcal{F}(V)$ denote the set of proper faces of a polyhedron $V$ and $\mathcal{C}=\{\text{relint}\,F\cap \text{relint}\,F'\,:\, F\in\mathcal{F}(V),F'\in\mathcal{F}(V'), V\not=V'\in\mathcal{V}\}$ be the collection of ``boundary pieces''. We enumerate, for reference, two properties of $\cal C$:
\begin{enumerate}[label=\roman*)]
\item For two distinct $V,V'\in\mathcal{V}$ and an $x\in\partial V\cap\partial V'$, there is a unique $B\in\mathcal{C}$ such that $x\in B$.
\item Each $B\in\mathcal{C}$ has dimension at most $n-1$.
\end{enumerate}
Indeed, for the first statement we notice that by partitioning, there exists unique faces $F\in\mathcal{F}(V)$ and $F'\in\mathcal{F}(V')$ such that $x\in\text{relint}\,F\cap\text{relint}\,F'$, hence a unique $B\in\mathcal{C}$ such that $x\in B$. For the second,
since the interiors of $V,V'$ are disjoint, for $B=\text{relint}\,F\cap \,\text{relint}\,F'\not=\emptyset$ we must have $F\in\partial V$,$F'\in\partial V'$, hence $\text{dim}\,B\leq \text{dim}\,\text{relint}\,F\wedge\text{dim}\,\text{relint}\,F'\leq n-1$. In addition to these observations, we will need the following two lemmas and one supporting proposition.

%--------------------------------------------------------------------------------------------------------
\begin{lemma}\label{lem:lasso-dpositive-l1}
If $B\in\mathcal{C}$ is of dimension at most $n-2$, then $\R_+B$ is of dimension at most $n-1$.
\end{lemma}
%--------------------------------------------------------------------------------------------------------
%--------------------------------------------------------------------------------------------------------
\begin{proof}
Let $S=\text{aff}\,B$ be the affine hull of $B$, of dimension at most $n-2$. Being an affine subspace, it can be written $S=\{x\,:\,Ax+b=0\}$ for some  matrix $A\in\R^{n\times n}$ with $\rank A\leq n-2$ and some vector $b\in\R^n$. Now, $B\subset S$ implies $\R_+B\subset \R S$, and $\R S=\{x\,:\,\exists t \,\text{s.t.}\,Atx+b=0\}=\left[I_n\,0\right]\{(x,t)\,:\,Ax+tb=0\}=\left[I_n\,0\right]\text{Ker}[A\,b]$.
Now, since $\rank A \leq n-2$, $[A\,b]$ has rank at most $n-1$, and thus $\text{Ker}[A\,b]$ is a subspace of $\R^{n+1}$ of rank at most $n-1$. Hence, $S=\left[I_n\,0\right]\text{Ker}[A\,b]$ has dimension at most $n-1$.
Since $\R_+B\subset \R S$, and $\R S$ is affine (actually, a subspace), then $\text{aff}\,\R_+B\subset \R S$. Consequently, $\text{dim}\,\R_+B\leq \text{dim}\,\R S\leq n-1$, as desired.
\end{proof}
%--------------------------------------------------------------------------------------------------------

%--------------------------------------------------------------------------------------------------------
\begin{proposition}\label{prop:hyperplane} Let $K\subset\R^n$ be a polyhedron of full dimension $n$ with irreducible decomposition $K=\bigcap_j H^+_j$. Denote the hyperplanes by $H_i=\partial H_i^+$ as usual. Then for any facet $F_i=K\cap H_i$ of $K$ and any point $x\in\relint F_i$, there exists an $\varepsilon>0$ such that $\intr H_i^+\cap B(x,\varepsilon)\subset\intr K$, $H_i\cap B(x,\varepsilon)\subset F_i$ and $H_i^{+C}\cap B(x,\varepsilon)\subset K^C$.
\end{proposition}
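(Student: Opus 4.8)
The plan is to reduce everything to one observation: at a relative interior point $x$ of the facet $F_i$, the $i$-th defining inequality of $X$ is tight while \emph{all other} defining inequalities are strict. Once that is established, the three inclusions follow by a routine localization argument. Throughout, write the half-spaces as $H_j^+=\{y\in\R^n\,\vert\,a_j^\top y\leq b_j\}$, so $H_j=\{y\,\vert\,a_j^\top y=b_j\}$.

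First I would prove the key claim: $a_i^\top x=b_i$, and $a_j^\top x<b_j$ for every $j\neq i$. The equality is immediate from $x\in F_i=X\cap H_i$. For the strict inequalities, argue by contradiction: suppose $a_j^\top x=b_j$ for some $j\neq i$. I would invoke two standard properties of irreducible decompositions of a full-dimensional polyhedron (cf.\ \cite[Sec.\ 2.6]{Grunbaum03}, \cite[Thm.\ 2.1.2]{Schneider13}), namely $\text{aff}\,F_i=H_i$ and $H_i\neq H_j$ as hyperplanes whenever $i\neq j$ --- the latter because coinciding hyperplanes would either make one half-space redundant (contradicting irreducibility) or be opposite, forcing $X\subseteq H_i$ (contradicting $\dim X=n$). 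Now restrict the affine function $y\mapsto a_j^\top y$ to $H_i=\text{aff}\,F_i$: if it is non-constant there, it exceeds $b_j$ at points of $H_i$ arbitrarily close to $x$, contradicting that a relative neighbourhood of $x$ in $H_i$ lies in $F_i\subset X\subset H_j^+$; if it is constant there, it equals $b_j$ on all of $H_i$, whence $H_i\subseteq H_j$ and so $H_i=H_j$, a contradiction. This proves the claim.

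Next I would record the elementary identity $\intr X=\bigcap_j\intr H_j^+$: ``$\supseteq$'' holds because the right-hand side is an open subset of $X$, and ``$\subseteq$'' because a point of $X$ with $a_j^\top y=b_j$ for some $j$ is not interior (moving from $y$ in the direction $a_j$ leaves $H_j^+\supseteq X$). Since there are finitely many indices $j\neq i$ and each satisfies $a_j^\top x<b_j$, continuity yields $\epsilon>0$ with $B(x,\epsilon)\subseteq\bigcap_{j\neq i}\intr H_j^+$. With this $\epsilon$ the three inclusions are immediate: any $y\in\intr H_i^+\cap B(x,\epsilon)$ satisfies $a_i^\top y<b_i$ and $a_j^\top y<b_j$ for all $j\neq i$, hence $y\in\bigcap_j\intr H_j^+=\intr X$; any $y\in H_i\cap B(x,\epsilon)$ satisfies $a_i^\top y=b_i$ and $a_j^\top y<b_j$ for $j\neq i$, hence $y\in X\cap H_i=F_i$; and any $y\in H_i^{+C}$ satisfies $a_i^\top y>b_i$, so $y\notin H_i^+\supseteq X$, i.e.\ $y\in X^C$ (the ball is not even needed here).

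I expect the key claim in the first step to be the only genuine obstacle: this is where irreducibility and full dimensionality are actually used, and the degenerate possibility $H_i=H_j$ must be ruled out carefully. The localization via continuity and the verification of the three inclusions are then routine.
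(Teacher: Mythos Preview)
Your proposal is correct and follows essentially the same approach as the paper: establish that every constraint $j\neq i$ is strict at $x$, pick $\epsilon$ by continuity so that $B(x,\epsilon)\subset\bigcap_{j\neq i}\intr H_j^+$, and then read off the three inclusions. The only difference is in how the key claim is justified: the paper invokes the face-lattice fact (citing Gr\"unbaum) that $x\in F_i\cap F_j$ would force $x$ into a proper face of $F_i$ and hence into its relative boundary, whereas you argue directly by restricting the affine functional $a_j^\top(\cdot)$ to $H_i=\text{aff}\,F_i$ and ruling out both the constant and non-constant cases --- your version is slightly more self-contained but arrives at the same place.
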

%--------------------------------------------------------------------------------------------------------
%--------------------------------------------------------------------------------------------------------
\begin{proof}
With the irreducible decomposition, $x\in K\cap H_j^+$ for all $j$; say for $j\not= i$ we had $x\in F_j=K\cap H_j$. As argued in \cite[Page\ 27]{Grunbaum03}, $x\in F_i\cap F_j$ implies that $x$ is in a facet of $F_i$, hence in $\partial F_i$, a contradiction with $x\in\relint F_i$. Thus $x\in K\cap \intr H_j^{+}$ for all $j\not= i$.

Since $x\in\bigcap_{j\not= i}\intr H^+_j$, we can find $\varepsilon>0$ such that $B(x,\varepsilon)\subset \bigcap_{j\not= i}\intr H^+_j$. Then $H^+_i\cap B(x,\varepsilon)\subset \bigcap_{j}\intr H^+_j \subset\intr K$, $H_i\cap B(x,\varepsilon)\subset H_i\cap \bigcap_{j\not= i}\intr H^+_j\subset H_i\cap K=F_i$, and finally $H^{+C}_i\cap B(x,\varepsilon)\subset \bigcup_{j}H^{+C}_j=K^C$, as desired.
\end{proof}
%--------------------------------------------------------------------------------------------------------

%--------------------------------------------------------------------------------------------------------
\begin{lemma}\label{lem:lasso-dpositive-l2}
Let $B\in\mathcal{C}$ be of dimension $n-1$ and $L$ be a ray centered at the origin such that $B\cap L\not=\emptyset$. Then either $\R_+B$ has dimension $n-1$, or $L\cap \intr V\not=\emptyset$ and $L\cap \intr V^{\prime}\not=\emptyset$.
\end{lemma}
%--------------------------------------------------------------------------------------------------------
%--------------------------------------------------------------------------------------------------------
\begin{proof}
Write $B=\text{relint}\,F\cap\text{relint}\,F'$. Since $\text{relint}\,F$, $\text{relint}\,F'$ both have dimension at most $n-1$, and $B$ has dimension $n-1$, then they must have exactly dimension $n-1$. They are thus facets of their respective polyhedra $V,V'\in\mathcal{V}$, which must have dimension $n$, and have exactly one supporting hyperplane. Thus $\text{aff}\,B =\text{aff}\,F =\text{aff}\,F'$, which we might denote $S$. Let $b\in L\cap S$, write $S=\langle a_1,...,a_{n-1}\rangle+b$ for some linearly independent vectors $a_1,...,a_{n-1}$, and let $a_n$ be such that $L=\R_+a_n$. Consider the affine transformation
\begin{align*}
\phi(x)=\left[a_1,...,a_{n-1},a_n\right]x+b=Ax+b,
\end{align*}
which maps vectors $(x_1,...,x_{n-1},0)$ bijectively to $S$, and vectors\linebreak $(0,...,0,x_n)$ with $x_n\geq-\|b\|/\|a_n\|$ bijectively to $L$. (Recall that $b\in S\cap L$.) There are then two possibilities.

\textit{Case i)} A has rank $n-1$. Then $a_n\subset \langle a_1,...a_{n-1}\rangle$ and $L\subset S$. But this means that $0\in L\subset S$, that is, that $S$ is a subspace. Then $\R_+B\subset \R S=S$, that is, $\R_+B$ has dimension $n-1$.

\textit{Case ii)} A has rank $n$. Then $\phi$ is bijective and $L\cap S$ is the singleton $\{b\}$. Recall that $V$ and $V'$ are polyhedra of dimension $n,$ and let $V=\bigcap_i H_i^+$, $V'=\bigcap_i H_i^{\prime +}$ be irreducible representations into half-spaces $H_i^+$, $H_j^{\prime+}$ with boundary $H_i$, $H_j^\prime$. By irreducibility, there are unique indices $i,j$ such that $\text{aff}\,F=H_i=\text{aff}\,F'=H'_j=S$.
Since $b\in\relint F \cap\relint F'$, by Proposition \ref{prop:hyperplane} there must be an $\varepsilon>0$ small enough so that $\intr H^+_i\cap B(b,\varepsilon)\subset \intr V$, $H_i^{+C}\cap B(b,\varepsilon)\subset V^C$, $\intr H_j^{\prime}\cap B(b,\varepsilon)\subset \intr V^{\prime}$ and $H_j^{\prime+C}\cap B(b,\varepsilon)\subset V^{\prime C}$.

But clearly $H_i^+\not=H_j^{\prime+}$, as otherwise $\emptyset\not=B(b,\varepsilon)\cap \intr H_i^+=B(b,\varepsilon)\cap
\intr H_j^{\prime+}\subset \intr V\cap \intr V^{\prime}=\emptyset$, a contradiction since interiors of regions are disjoint.
Thus it must hold that $\intr H_i^+=H_j^{\prime+C}$. In light of this, we may simplify the notation by writing $S^+=\intr H_i^+$ and $S^-=H_j^{\prime+C}$.
The state of affairs is then that $B(b,\varepsilon)\cap S^+\subset \intr V$, $B(b,\varepsilon)\cap S\subset B$ and $B(b,\varepsilon)\cap S^-\subset \intr V^{\prime}$.

Write $\R^{(n-1)+}=\{x\in\R^n\,:\, x_n>0\}$ and $\R^{(n-1)-}=\{x\in\R^n\,:\, x_n<0\}$. Since $\phi$ is open, it must map connected components to connected components, and being bijective it must hold that $\phi(\R^{(n-1)+})=S^+$ and $\phi(\R^{(n-1)-})$ $=S^-$, or vice versa. Fix the former by considering $x\mapsto-Ax+b$ instead of $x\mapsto Ax+b$ if necessary. Since $\phi^{-1}(B(b,\varepsilon))$  is an open set, there must be an $\varepsilon'$ such that $B(0,\varepsilon')\subset \phi^{-1}(B(b,\varepsilon))$.

Let $\varepsilon''=\varepsilon'\wedge\|b\|_2/\|a_n\|_2$, so that $|t|<\varepsilon''$ implies $ta_n+b\in L$. Then, we have that $\emptyset\not=\phi(\{(0,...,0,t),t\in(0,\varepsilon'')\})\subset S^+\cap L\cap B(b,\varepsilon) \subset L\cap \intr V$ and $\emptyset\not=\phi(\{(0,...,0,t),t\in(-\varepsilon'',0)\}) \subset S^-\cap L\cap B(b,\varepsilon)\subset L\cap \intr V^{\prime}$. Thus, $L\cap \intr V$ and $L\cap \intr V^{\prime}$ are non-empty, as desired.
\end{proof}
%--------------------------------------------------------------------------------------------------------

We may now turn to the proof of the theorem.

%--------------------------------------------------------------------------------------------------------
\begin{proof}[Proof of Theorem \ref{thm:lasso-dpositive}]
Every region $V\in\cal V$ is a polyhedron, so has a decomposition $V=P(V)+C(V)$
 into a polytope $P(V)$ and a cone $C(V)$ by Minkowski's theorem \cite[Theorem 1.2]{Ziegler95}.
Define $\mathcal{V}_0=\{V\in\mathcal{V}\,:\,\intr V\cap\R_+\xi\not=\emptyset\}$, $T=\bigcup_{V\in\mathcal{V}_0}V$ and
\begin{align*}
R=\bigcup_{\substack{B\in\mathcal{C}\\ \text{dim}\,\R_+B\\\,\leq n-1}}\hspace{-10pt}\R_+B
\quad\cup
\bigcup_{\substack{V\in\mathcal{V}\\\text{dim}\,C(V)\\\leq n-1}}\hspace{-10pt}C(V)
\quad\cup
\bigcup_{\substack{V\in\mathcal{V}}}\partial C(V).
\end{align*}
The set $R$ is a finite union of closed sets of dimension at most $n-1$,
so is closed and has measure zero.
We argue that if $\xi\in R^C$, then $\R_+\xi\subset \intr T$. Indeed, say that $t\xi\in\partial T$ for some $t>0$.
Since $\partial T\subset \bigcup_{V\in\mathcal{V}_0} \partial V$,
there is a $V_0\in\mathcal{V}_0$ such that $t\xi\in \partial V_0$.
Since $\R^n=\bigcup_{V\in\mathcal{V}}V$ and $t\xi\in\partial T$, we have
$t\xi\in \cl(T^C)=\bigcup_{V\not\in\mathcal{V}_0}V$. Thus there must also be a $V_1\not=V_0$, $V_1\not\in\mathcal{V}_0$ such that $t\xi\in V_1$. But since the interiors are disjoint, if $t\xi\in \intr V_1$ there would be a contradiction with $t\xi\in\partial V_0$; hence $t\xi\in\partial V_1$. Thus $t\xi\in \partial V_0\cap \partial V_1$ and there must be a unique $B\in\mathcal{C}$ such that $t\xi\in B$. That piece, like all elements of $\mathcal{C}$, must be of dimension $n-1$ or lower.

We argue that $B\subset R$. If it has dimension $n-2$, then $\text{dim}\,\R_+B\leq n-1$ by Lemma \ref{lem:lasso-dpositive-l1}, so $B$ is indeed a subset of $R$.
Now say instead it has dimension $n-1$ and recall that $t\xi\in\R_+\xi\cap B$. Let $F\in\mathcal{F}(V_0)$ and $F'\in\mathcal{F}(V_1)$ be such that $B=\text{relint}\,F\cap \text{relint}\,F'$.
By Lemma \ref{lem:lasso-dpositive-l2}, we must have either
$\text{dim}\,\R_+B=n-1$, or $\R_+\xi\cap \intr V_0\not=\emptyset$ and $\R_+\xi\cap \intr V_1\not=\emptyset$.
%\jo{I think there is no prime in V' here. Moreover the link V subset of T impossible so the dim of $\R_+B=n-1$ is not clear to me }
 But if the latter was the case, then $V_1\subset T$ by definition, which is impossible; thus we must have $\text{dim}\,\R_+B=n-1$, hence $B\subset R$ again.

Thus in all cases, $B\subset R$. Let $d(x,A):=\inf_{y\in A}\|x-y\|_2$ denote the Euclidean distance between a point $x$ and a set $A$. Since $t\xi\in B$, we have $d(t\xi,R)=0$. But at the same time, since $\xi\in R^C$ and $R$ is closed we must have $d(\xi,R)>0$, and since $R$ is invariant under positive multiplication,
\begin{align*}
d(t\xi,R)=\inf_{y\in R}\|t\xi\!-\!y\|_2=t\inf_{y/t\in R}\|\xi\!-\!y\|_2=t\inf_{y\in R}\|\xi\!-\!y\|_2=td(\xi,R)>0.
\end{align*}
This is a contradiction, and we conclude that $\R_+\xi\subset \intr T$.

Now, $\mathcal{V}_0$ is finite, since $C$ has only a finite number of faces.
Let $h$ be the continuous map $t\mapsto t\xi$, and consider for each $V\in\mathcal{V}_0$ the closed set $h^{-1}(V)$. This set must be convex, since for $s,t\in h^{-1}(V)$, $h(\gamma s+(1-\gamma)t)=[\gamma s+(1-\gamma)t]\xi=\gamma[s\xi]+(1-\gamma)[t\xi]\in V$ by convexity of $V$ for any $\gamma\in[0,1]$. The only closed, convex sets of $\R$ are the closed intervals: thus $h^{-1}(V)=[\alpha ,\beta]$ for some $\alpha \leq \beta$.

Enumerate arbitrarily the $V\in\mathcal{V}_0$ as $V_1,...,V_m$, and for $V_i\in\mathcal{V}_0$ let $[\alpha _i,\beta_i]=h^{-1}(V_i)$. Now, $\R_+\xi\subset T=\bigcup_{i=1}^m V_i$, so $h^{-1}(T)=\bigcup_{i=1}^m[\alpha _i,\beta_i]=\R_+$. Then some $\beta_i$ must equal $\infty$, otherwise the union would be bounded. Moreover, since the interiors of the $V$'s are disjoint, $(\alpha _i,\beta_i)\cap(\alpha _j,\beta_j)=\emptyset$ for $i\not=j$ and the $\beta_i=\infty$ must be unique, all the others finite. By reordering if necessary, take $0=\alpha _1<\beta_1\leq \alpha _2<\beta_2\leq...\leq \alpha _m<\beta_m=\infty$.

The region $V_m$ is a polyhedron, so has a decomposition as $V_m=P(V_m)+C(V_m)$ for some polytope $P(V_m)$ and cone $C(V_m)$ by
Minkowski's theorem \cite[Theorem 1.2]{Ziegler95}. Fix a point $t_0\in(\alpha_m,\infty)$; then since $t_0\xi+\R_+\xi=(t_0,\infty)\xi\subset \intr V_m$, by \cite[2.5.1]{Grunbaum03}, we conclude that $\R_+\xi\subset C(V_m)$, so $t_0\xi,\xi\in C(V_m)$. If $C(V_m)$ has dimension at most $n-1$, or $t_0\xi\in\partial C(V_m)\Leftrightarrow\xi\in\partial C(V_m)$, then $\xi\in  R$, which contradicts $\xi\in R^C$ -- thus $C(V_m)$ must have dimension
$n$ and $t_0\xi,\xi\in \intr C(V_m)$. Let $\varepsilon_1$ be small enough so that $B(\xi,\varepsilon_1)\subset \intr C(V_m)$.
Then for any $s>0$, $B([t_0+s]\xi,s\varepsilon_1)=t_0\xi+sB(\xi,\varepsilon_1)\subset \intr C_m\subset \intr V_m$.
Thus for all $s>2t_0\|\xi\|_2/\varepsilon_1$, $B(s\xi,s\varepsilon_1/2)\subset B([t_0+s]\xi,s\varepsilon_1)\subset \intr V_m\subset \intr T$.

Next, notice that the segment $[0,2t_0\|\xi\|_2/\varepsilon_1]\xi$ is compact and in $\intr T$. This implies that $d([0,2t_0\|\xi\|_2/\varepsilon_1]\xi,\intr T^C)$ is strictly positive and also that there must be an $\varepsilon_2$-neighborhood such that $B([0,2t_0\|\xi\|_2/\varepsilon_1]\xi,\varepsilon_2)\subset \intr T$. Hence, for all $s\in[0,2t_0\|\xi\|_2/\varepsilon_1]$, it holds $B(s\xi,s\varepsilon_2\varepsilon_1/2t_0\|\xi\|_2)\subset B(s\xi,\varepsilon_2)\subset \intr T$.

Finally, let $\varepsilon=\min(\varepsilon_1/2,\varepsilon_1\varepsilon_2/2t_0\|\xi\|_2)$.
Then for all $s\geq0$, we have $B(s\xi,s\varepsilon)\subset \intr T$. Let $|y-\xi|<\varepsilon$ and define  $\eta=\sgn\hat\beta(sy)$.
Then $\cl W(\eta)=V_0$ for some $V_0\in\mathcal{V}_0$, since otherwise $\cl W(\eta)\subset \cl T^C=\bigcup_{V\not\in\mathcal{V}_0}V$, which would contradict $sy\in \intr T$. But there is a $t>0$ such that $t\xi\in \intr V_0$, since $V_0\in\mathcal{V}_0$, and since $\sgn\hat\beta$ is constant over $\intr V_0$, $\sgn\hat\beta(t\xi)=\sgn\hat\beta(sy)$. Thus in particular $\hat S[sy]=\hat S[t\xi]$. Since this is true for all $|y-\xi|<\varepsilon$, we conclude that $D(\xi)\geq\varepsilon>0$, as desired.
\end{proof}
%--------------------------------------------------------------------------------------------------------

%%%%%%%%%%%%%%%%%%%%%%%%%%%%%%%%%%%%%%%%
\section{Description of the lslassoBIC}
\label{sec:lassobic}
%%%%%%%%%%%%%%%%%%%%%%%%%%%%%%%%%%%%%%%%
In this section, we provide details for the lslassoBIC implementation that we have used. This method is similar to applying a BIC procedure over the refitted models obtained by a lasso path, and was recently analyzed in \cite{Bellec16}.

We consider the same collection of tuning parameters
$\Lambda=\{\lambda_1,\dots,\lambda_r\}$ as before, and we denote the associated supports by $(\hat S^1,\dots,\hat S^r)$, where $\hat S^i:=\supp[\hat\beta^{\lambda_i}]$. Following Equation~\eqref{eq:refitting}, we write $(\bar\beta^{\lambda_1},\dots,\bar\beta^{\lambda_r})$ for the estimated least-squares over these supports.
Let us introduce for each $j\in[r]$ a prior $\pi_{j}$ via
\begin{align*}
    \pi_{j} = \left(H_p \binom{p}{|\hat S^j|} \exp(|\hat S^j|)\right)^{-1}\enspace
\end{align*}
with $H_p=(e-e^{-p})/(e-1)$. Then, the lslassoBIC is defined  by
\begin{align*}
\bar\beta^{\rm lslassoBIC} =\bar\beta^{\lambda_{j^{\star}}} \quad \text{ with }
 \quad    j^{\star} \in \argmin_{j \in [r]} \left( \|Y-X\bar\beta^{\lambda_j}\|_2^2+14 \hat\sigma^2 \log \frac{1}{\pi_{j}} \right) \enspace,
\end{align*}
where $\hat\sigma$ is a standard deviation estimate of the (Gaussian) noise. As the practical estimation of $\sigma$ is not discussed further in \cite{Bellec16}, we have used the same estimator as for \avt, namely the one defined in Algorithm~\ref{algosqrt}.

Note that we have adapted the method proposed by \cite{Bellec16} to the case of a predetermined number of lasso parameters. This is because the number of kinks over the lasso path can be as large as $(3^p+1)/2$ \cite{Mairal_Yu12}, making an estimator based on the entire collection of kinks intractable.

\end{document}